\documentclass{article}
\usepackage[a4paper,margin=1in]{geometry}
\usepackage[T1]{fontenc}
\usepackage{setspace}
\usepackage{graphicx} % Required for inserting images
\usepackage{amsmath}
\usepackage{mathtools}
\usepackage{amsthm}
\usepackage{tablefootnote}
\usepackage{braket}
\usepackage{amssymb}
\usepackage{mathrsfs}
\usepackage{amsfonts}
\usepackage{authblk}
\usepackage{makecell}
\usepackage{threeparttable}
\usepackage{bbding}
\usepackage{subfig}
\usepackage{dcolumn}
\usepackage{float}
\usepackage{comment}
\usepackage{multirow}
\usepackage{color}
\usepackage{tcolorbox}
\usepackage{algorithm}
\usepackage{algorithmic}
\usepackage{tikz}
\usepackage{appendix}
\usepackage[linktocpage=true, colorlinks,
linkcolor=blue,citecolor=blue,
bookmarks,bookmarksopen,bookmarksnumbered]
{hyperref}
\usepackage{circuitikz} 
\usepackage{tikz-3dplot} 
\usepackage{adjustbox}
\allowdisplaybreaks[4]

\newcommand{\rbra}[1]{\left( #1 \right)} 
\newcommand{\sbra}[1]{\left[ #1 \right]}
\newcommand{\cbra}[1]{\left\{ #1 \right\}}
\newcommand{\abs}[1]{\lvert #1 \rvert}
\newcommand{\Abs}[1]{\lVert #1 \rVert}

\theoremstyle{plain}

\newtheorem{theorem}{Theorem}[section]

\newtheorem{lemma}[theorem]{Lemma}

\theoremstyle{definition}
\newtheorem{definition}[theorem]{Definition}

\newtheorem{remark}[theorem]{Remark}

\title{Quantum Tsallis Entropy Estimation: Near-Optimal Bounds and High-Precision Improvements}

\begin{document}

\author{Chengshen Gao}
\author{Zongxing Xiong\thanks{hongbe123@outlook.com}}
\author{Lvzhou Li\thanks{lilvzh@mail.sysu.edu.cn}}
\affil{Institute of Quantum Computing and Software, School of Computer Science and Engineering, Sun Yat-sen University, Guangzhou 510006, China}
\date{}

\maketitle

\begin{abstract} 
Estimating statistical properties of probability distributions is a
fundamental problem in statistics and theoretical computer science. We develop quantum algorithms for estimating the $q$-Tsallis entropy $\mathrm{H}_q^{\mathrm{Tsa}}(P)
=\frac{\sum_{i=1}^n p_i^q-1}{1-q}$ when $0<q<1$, given purified quantum query access to a distribution $P$ on $[n]$. When $0<q\leq \frac{1}{2}$, we provide a quantum algorithm with query complexity $\widetilde{O}\rbra{\left(\frac{n}{\varepsilon}\right)^{\frac{1}{2q}}}$, which is optimal up to polylogarithmic factors. When $\frac{1}{2}<q<1$,   we introduce a two-stage simultaneous probability estimation procedure, yielding a quantum algorithm with query complexity $\widetilde{O}\left(\frac{n^{\frac12+\frac{1}{2q}}}{\varepsilon^{\frac{1}{2q}}}+\frac{n^{1-q}}{\varepsilon}\right)$, which improves the result given by \hyperlink{cite.chen2026quantum}{Chen et al. (ICALP 2026)} when $\varepsilon=O\rbra{n^{1-2q}}$.

\begin{comment}
Estimating statistical properties of probability distributions is a
fundamental problem in statistics and computer science. We develop quantum algorithms for estimating the $q$-Tsallis entropy $\mathrm{H}_q^{\mathrm{Tsa}}(P)
=\frac{\sum_{i=1}^n p_i^q-1}{1-q}$ when $0<q<1$, given purified quantum query access to a distribution $P$ on $[n]$. When $0<q\leq \frac{1}{2}$, we provide a quantum algorithm that adapts the framework of \hyperlink{cite.chen2026quantum}{Chen et al. (ICALP 2026)} with query complexity $\widetilde{O}\rbra{\left(\frac{n}{\varepsilon}\right)^{\frac{1}{2q}}}$, which is optimal up to polylogarithmic factors. For $\frac{1}{2}<q<1$, however, this framework incurs a higher query complexity in the high-precision regime. To reduce this cost, we introduce a two-stage simultaneous probability estimation procedure, yielding an algorithm that uses $\widetilde{O}\left(\frac{n^{\frac12+\frac{1}{2q}}}{\varepsilon^{\frac{1}{2q}}}+\frac{n^{1-q}}{\varepsilon}\right)$ queries and improves upon the query complexity of \hyperlink{cite.chen2026quantum}{Chen et al. (ICALP 2026)} when $\varepsilon< O\rbra{n^{1-2q}}$.
\end{comment}
\end{abstract}

\section{Introduction}
Estimating numerical properties of an unknown probability distribution
is a fundamental problem in statistics and theoretical computer science.
An important family of such properties is given by the power sums
$$
F_q(P)=\sum_{i=1}^n p_i^q,
$$
which determine both the Tsallis entropy~\cite{tsallis1988possible} and the R\'enyi entropies~\cite{renyi1961measures}. In this paper, we study quantum algorithms for estimating the $q$-Tsallis entropy
\begin{equation}\label{eq:entropy}
\mathrm{H}_q^{\mathrm{Tsa}}(P)
=\frac{F_q(P)-1}{1-q}
\end{equation}
 for $0<q<1$, given purified quantum
query access to a distribution $P=(p_1,\ldots,p_n)$ on $[n]$.

The regime $0<q<1$ presents a particular challenge: small probabilities
can contribute substantially to $F_q(P)$ even though they account for little probability mass. Indeed, the ratio $p_i^q/p_i=p_i^{q-1}$ increases as $p_i$ decreases. Moreover, the derivative $q x^{q-1}$ diverges as $x$ approaches zero, making the resulting power-sum estimate sensitive to errors in small probabilities. An effective algorithm must therefore control the estimation error across different probability scales.

Quantum algorithms for distribution testing and entropy estimation have been studied extensively \cite{chakraborty2010new,bravyi2011quantum,montanaro2015quantum,gilyen2020distributional,luo2024succinct,canonne_et_al:LIPIcs.TQC.2025.7,li2018quantum,wang2024quantum}. Most closely related to our work, \cite{chen2026quantum} introduced a multilevel framework for estimating functionals of discrete distributions. Their framework partitions the probabilities $p_i$ into intervals and applies quantum singular value transformation (QSVT)~\cite{gilyen2019quantum,martyn2021grand} separately within each interval. For $0<q<1$, it yields a Tsallis entropy estimator with query complexity $\widetilde{O}\left(\frac{n^{\frac{1}{q}-\frac{1}{2}}}{\varepsilon^{\frac{1}{q}}}\right)$, while the lower bounds are $\widetilde{\Omega}\rbra{ \left(\frac{n}{\varepsilon}\right)^{\frac{1}{2q}}}$ for $0<q\leq \frac{1}{2}$ and $\widetilde{\Omega}\rbra{
\left(\frac{n}{\varepsilon}\right)^{\frac{1}{2q}}
+\frac{n^{1-q}}{\varepsilon}}$ for $\frac{1}{2}<q<1$. A gap remains between this upper bound and the lower bounds, both in the dependence on the distribution size $n$ and in the dependence on the additive error $\varepsilon$.
This raises the following question:
\begin{center}
  \emph{Can we improve the quantum query complexity of Tsallis entropy
estimation for $0<q<1$?}  
\end{center}
We address this question using different estimators for
$0<q\leq\frac12$ and $\frac12<q<1$, obtaining optimal query
complexity up to polylogarithmic factors in the former regime
and improved bounds at high precision in the latter.

\subsection{Main results}

We use the purified quantum query access model. The input is a
unitary oracle $U_P$ satisfying
\begin{equation}\label{eq:oracle}
U_P\ket{0}_A\ket{0}_B
=\sum_{i=1}^n\sqrt{p_i}\ket{\phi_i}_A\ket{i}_B,
\end{equation}
where each $\ket{\phi_i}$ is an arbitrary normalized state.
Query complexity counts applications of $U_P$, $U_P^\dagger$,
and their controlled versions.
Throughout, we assume $n\geq2$ and $0<\varepsilon\leq1$. Our results address the regimes $0<q\leq\frac12$ and $\frac12<q<1$ separately.

\paragraph{Optimal estimation for $0<q\leq\frac12$.}
By adapting the multilevel framework of Chen et al.~\cite{chen2026quantum}, we obtain the following result.
\begin{theorem}[Tsallis entropy estimation for $0<q\leq1/2$]
\label{thm: low}
Let $P=(p_1,\ldots,p_n)$ be a probability distribution over $[n]$, and
suppose that purified quantum access $U_P$ to $P$ is available. For any
$0<q\leq1/2$ and $0<\varepsilon\leq1$, there exists a quantum algorithm
that outputs an estimate $\widehat{H}$ satisfying $\left|
\widehat{H}
-
\mathrm{H}_q^{\text{Tsa}}(P)
\right|
\leq\varepsilon$ with probability at least $2/3$, using
$$
\widetilde{O}\left(
\left(\frac{n}{\varepsilon}\right)^{\frac{1}{2q}}
\right)
$$
queries to $U_P$ and $U_P^\dagger$.
\end{theorem}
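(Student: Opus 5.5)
Since $\mathrm{H}_q^{\mathrm{Tsa}}(P)=(F_q(P)-1)/(1-q)$ and $1-q\ge 1/2$ for $q\le 1/2$, it suffices to estimate $F_q(P)=\sum_i p_i^q$ to additive error $\Theta(\varepsilon)$. The plan is to adapt the multilevel framework of Chen et al.: split $[0,1]$ into dyadic intervals $I_k=(2^{-k-1},2^{-k}]$ for $k=0,\dots,K$, plus a tail $[0,\delta]$. Within each interval we use QSVT to build a block-encoding of a polynomial approximation of $x^{q-1}$ (more precisely, of $\sqrt{x}^{\,2q-2}$ acting on the singular values $\sqrt{p_i}$ of the standard block-encoding derived from $U_P$), restricted to that interval. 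Amplitude estimation then yields $\sum_{i:p_i\in I_k} p_i\cdot p_i^{q-1}$. To avoid the full cost of filtering, we would instead use a smooth partition of unity made of polynomial windows, so that the pieces sum exactly to $F_q$ up to approximation error.

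The threshold is $\delta=(\varepsilon/n)^{1/q}$, chosen so that the tail contributes $\sum_{p_i\le\delta}p_i^q\le n\delta^q=\varepsilon$. This tail can therefore be dropped, or handled by a crude polynomial that is bounded on $[0,\delta]$. The overall query budget is set by the lowest level, where the polynomial degree must be about $1/\sqrt{\delta}=(n/\varepsilon)^{1/(2q)}$ to resolve singular values near $\sqrt\delta$. This matches the target bound. For level $k$ (so $p\approx 2^{-k}$), the function $x^{q-1}$ has magnitude about $2^{k(1-q)}$ on $I_k$. After normalizing by this factor we get a bounded function, approximable in degree $d_k=\widetilde O(2^{k/2})$ in $\sqrt{x}$.

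Amplitude estimation at level $k$ estimates $a_k=2^{-k(1-q)}\sum_{i\in I_k}p_i^q\le \sum_{i\in I_k}p_i=:m_k$. We need error $\varepsilon_k$ on $\sum_{i\in I_k}p_i^q$, so we need additive error $\varepsilon_k 2^{-k(1-q)}$ on $a_k$. Amplitude estimation achieves this with $O(\sqrt{a_k}\,2^{k(1-q)}/\varepsilon_k+2^{k(1-q)}/\sqrt{\varepsilon_k 2^{k(1-q)}})$ repetitions, each costing $d_k$ queries. We use the bounds $a_k\le m_k\le \min(1,n_k2^{-k})$ and $\sum_k\varepsilon_k\le\varepsilon$. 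The main work, and the expected obstacle, is choosing $\varepsilon_k$ so that the total $\sum_k d_k\cdot(\text{repetitions})$ is $\widetilde O((n/\varepsilon)^{1/(2q)})$ without knowing the $n_k$.

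The natural choice is $\varepsilon_k\approx\varepsilon/K$ with $K=O(\log(n/\varepsilon))$. Then the second term gives $2^{k/2}\cdot 2^{k(1-q)/2}\varepsilon^{-1/2}$, which is maximized at $2^k=1/\delta$ and equals $(n/\varepsilon)^{(2-q)/(2q)}\varepsilon^{-1/2}$; this is worse than the target. The remedy is to use variable-time amplitude estimation, or else to exploit $q\le1/2$ through the bound $p^q\le 2^{-kq}$. With the latter, the needed relative precision at the deep levels is only $\varepsilon 2^{kq}/n$ per item. We should therefore estimate the normalized quantity $\sum_{i\in I_k}p_i^q/(n2^{-kq})\le 1$ directly, by encoding $p^{q-1}2^{-k(1-q)}$ and measuring the overlap in a uniform-superposition-weighted fashion. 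The cost is then about $d_k\cdot n2^{-kq}/\varepsilon$, which at $k=\log(1/\delta)$ gives $2^{k/2}$. The condition $q\le1/2$ makes $2^{k(1/2-q)}n/\varepsilon$ controllable, with a balance occurring exactly at $\delta$. Verifying this per-level balance, and checking that the sum over the $K$ levels contributes only polylogarithmic factors, is the step I expect to be delicate. The remaining steps are routine: QSVT approximation errors are set to $\varepsilon/\mathrm{poly}(K)$, and median amplification over $O(\log K)$ repetitions makes all levels succeed simultaneously with probability $2/3$.
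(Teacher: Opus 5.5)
Your final scheme matches the paper's. You use a uniform input state. At level $k$ you estimate $\sum_{i\in I_k}p_i^q/(n2^{-kq})\le 1$ to additive error $\Theta(\varepsilon 2^{kq}/(nK))$, with cutoff $\delta=(\varepsilon/n)^{1/q}$. The per-level cost is $\widetilde O(\frac{n}{\varepsilon}2^{k(1/2-q)})$, nondecreasing in $k$ for $q\le 1/2$ and so dominated by the finest level. The step you call delicate is just this geometric sum, and the detour through distribution weighting and variable-time amplitude estimation is unnecessary.

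\textbf{The local transformation is wrong.} With the uniform input each label carries weight $1/n$ instead of $p_i$. Encoding $p^{q-1}2^{-k(1-q)}$ therefore makes the level-$k$ quantity $\frac1n\sum_{i\in I_k}(2^kp_i)^{q-1}$, which is not proportional to $\sum_{i\in I_k}p_i^q$. The transformation must reproduce $p^q/\Lambda_k$ with $\Lambda_k=\Theta(2^{-kq})$, a normalized \emph{positive} power. This is exactly the paper's modification: $Q_j\approx g_j(x)=2^{-q-1}t_j^{-q/2}x^q$ in the singular value $x=\sqrt p/2$, so that $\Lambda_jQ_j(\sqrt p/2)^2\approx p^q$. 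This target is bounded and vanishes at $0$, so the positive-power approximation lemma applies directly.

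\textbf{The error budget is off by a factor of $n$.} Each level's estimate is rescaled by $n\Lambda_k$, which is $\Theta(n)$ at the coarse levels. Every polynomial-approximation, QSVT-implementation and filtering error must therefore be $O(\varepsilon/(nK))$; the paper takes $\xi=\eta/(1024mn(1+C_q))$. With your $\varepsilon/\mathrm{poly}(K)$ the final error would be of order $n\varepsilon/\mathrm{poly}(K)$. The fix costs only logarithmic factors in the degrees.

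\textbf{The partition of unity is not constructed.} The paper builds it from the singular-value discriminators of Chen et al.\ at thresholds $\sqrt{t_j}/2$, with overlapping intervals $(t_j/4,4t_j]$ and $t_j=4^{-j}$. The flag weights $(1-f_{j-1})f_j$ sum to within $\xi^2$ of $1$ over two adjacent levels and are at most $\xi^2$ off the interval. Each application costs $\widetilde O(t_j^{-1/2})$. Suppression from above is indispensable for two reasons: it prevents a label from being counted at many levels, and $p^q/\Lambda_k$ exceeds $1$ for $p\gg2^{-k}$, where no bounded polynomial can follow it. A purely polynomial family of windows could replace the discriminators. It would have to meet the same $O(\varepsilon/(nK))$ tolerances at degree $\widetilde O(2^{k/2})$, and you have not shown this.
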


This bound matches the known lower bound up to polylogarithmic factors. Our construction uses uniform initialization and modifies
the local QSVT polynomials, while retaining the discriminators
and amplitude estimation routine of the existing framework.

\paragraph{Improved estimation for $\frac12<q<1$.}
For this regime, we combine a two-stage simultaneous probability estimation
procedure with quantum mean estimation~\cite{montanaro2015quantum,hamoudi2021quantum,kothari2023mean}.

\begin{theorem}[Tsallis entropy estimation for $\frac12<q<1$]
\label{thm:high}
Let $\frac12<q<1$ and $0<\varepsilon\leq1$.
Given purified quantum query access to a distribution $P$ on $[n]$,
there exists a quantum algorithm that outputs an estimate
$\widehat{H}$ satisfying $\left|\widehat{H}-\mathrm{H}_q^{\mathrm{Tsa}}(P)\right|
\leq\varepsilon$ with probability at least $2/3$, using
\[
\widetilde{O}\left(
\frac{n^{\frac12+\frac{1}{2q}}}
     {\varepsilon^{\frac{1}{2q}}}
+\frac{n^{1-q}}{\varepsilon}
\right)
\]
queries to $U_P$ and $U_P^\dagger$.
\end{theorem}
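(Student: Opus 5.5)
Since $\mathrm{H}_q^{\mathrm{Tsa}}(P)=(F_q(P)-1)/(1-q)$ and $1-q$ is a constant in $(0,\tfrac12)$, it suffices to estimate $F_q(P)=\sum_i p_i^q$ to additive error $\Theta(\varepsilon)$. We write $F_q(P)=\mathbb{E}_{i\sim P}[p_i^{q-1}]$ and estimate this mean by quantum mean estimation~\cite{hamoudi2021quantum,kothari2023mean}. We cannot evaluate $p_i^{q-1}$ exactly, so we replace it by $\widehat{p}_i^{\,q-1}$, where $\widehat{p}_i$ comes from a coherent (simultaneous) probability estimation subroutine. The error then splits into a bias term $\sum_i p_i\,|\widehat{p}_i^{\,q-1}-p_i^{q-1}|$ and a statistical term handled by mean estimation. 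Two steps need care. Small $p_i$ make $p_i^{q-1}$ unbounded, so we will truncate: indices with $p_i<\tau$ contribute at most $\sum_{p_i<\tau}p_i^q\le n\tau^q$ to $F_q(P)$, and we choose $\tau=\Theta((\varepsilon/n)^{1/q})$ so that we may treat them as $0$ or clip them. Also, one-stage estimation to relative precision at scale $\tau$ would cost about $1/\sqrt{\tau}$ per sample, which is too expensive, so we use two stages.

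\textbf{Stage 1 (coarse).} We use amplitude estimation on $U_P$ (projecting register $B$ onto $\ket{i}$), run coherently, to obtain $\widetilde p_i$ with $|\widetilde p_i-p_i|\le O(\sqrt{p_i/T_1}+1/T_1)$ using $T_1$ queries. Taking $T_1\approx 1/\tau=(n/\varepsilon)^{1/q}$ is too costly. Instead, Stage~1 only sorts each $i$ into a dyadic level $p_i\in[2^{-k},2^{-k+1})$, up to constant factors and a boundary region of small mass. For this we use the discriminators from the framework of Chen et al.~\cite{chen2026quantum}, with $\widetilde O(2^{k/2})$ queries at level $k$.

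\textbf{Stage 2 (refined).} Conditioned on level $k$, we refine $\widehat p_i$ with $T_k$ queries, giving relative error $\delta_k\approx 1/\sqrt{2^{-k}T_k}$. By the mean value theorem the bias at level $k$ is about $(1-q)\,\delta_k\sum_{i\in L_k}p_i^{q}$, which is at most $\delta_k |L_k|2^{-kq}\le\delta_k\, n^{1-q}\,m_k^{q}$, where $m_k$ is the mass of level $L_k$ (by H\"older). Requiring the total bias to be $\le\varepsilon$ across the $O(\log(1/\tau))$ levels gives $\delta_k\approx \varepsilon/(n^{1-q}\log)$ in the worst case. The resulting cost $T_k\approx 2^k n^{2-2q}/\varepsilon^2$ is too large for small $p$.

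We therefore change the order of operations. We apply quantum mean estimation, whose cost is about $\sigma/\varepsilon$ times the per-sample cost, to the random variable $X=\widehat p_i^{\,q-1}$. We bound its second moment by $\mathbb{E}[p_i^{2q-2}]=\sum_i p_i^{2q-1}\le n^{2-2q}$ for $q>\tfrac12$, which gives the factor $n^{1-q}/\varepsilon$. The per-sample cost should be polylogarithmic on average, so we use variable-time amplitude estimation, under which the cost of the subroutine is the $\ell_2$-average $\sqrt{\mathbb{E}_i[T(i)^2]}$ rather than its maximum. The first term $n^{\frac12+\frac1{2q}}/\varepsilon^{1/(2q)}$ should come from the refinement cost at the truncation scale: per-sample cost $\sqrt{1/\tau}=(n/\varepsilon)^{1/(2q)}$, weighted so that only a $\sqrt{n}$ factor survives (heuristically, $n\cdot\tau$ mass at the lowest level times $1/\tau$ cost, under a square root, times the variable-time overhead).

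The main obstacle is balancing bias against cost across levels so that the two terms add rather than multiply. Our first attempt is to set $T_k$ so that the bias contribution of each level is $\varepsilon/\log$. We would then verify that $\sum_k m_k T_k^2$ and the mean-estimation variance together give $\widetilde O\bigl(n^{\frac12+\frac1{2q}}\varepsilon^{-\frac1{2q}}+n^{1-q}/\varepsilon\bigr)$. We also need to control the bias of the estimator: we use median-of-means estimates, or make the QSVT-based Stage~2 nearly unbiased, and we handle the misclassification error at level boundaries by the standard randomized-threshold trick of~\cite{chen2026quantum}. Finally, we boost the success probability to $2/3$ by taking medians, and the logarithmic overheads are absorbed into $\widetilde O$.
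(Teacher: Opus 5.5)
Your proposal has a genuine gap: it never makes the two cost terms add, and the mechanism you offer cannot do so. You compute $\widehat p_i$ coherently inside every round of mean estimation, so the total cost is about $(\sigma/\varepsilon)\cdot\sqrt{\mathbb{E}_{i\sim P}[T(i)^2]}$. Variable-time amplitude estimation only replaces the worst-case per-sample cost by this $\ell_2$-average; it does not decouple the two phases. That average is not polylogarithmic. Even locating $p_i$ within a constant factor, which is all your Stage~1 discriminators do, costs $\Omega(1/\sqrt{p_i})$ queries. Since $\mathbb{E}_{i\sim P}[1/p_i]=|\mathrm{supp}(P)|$, the uniform distribution forces a per-sample cost of $\Omega(\sqrt n)$, so your mean-estimation phase alone costs $\Omega(n^{3/2-q}/\varepsilon)$. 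That is exactly the bound of Remark~\ref{rem:dualhigh}: it is always a factor $\sqrt n$ above $n^{1-q}/\varepsilon$, and it exceeds the target once $\varepsilon$ is polynomially small in $n$. Your heuristic for the first term has the same problem. The quantity $\sqrt{n\tau\cdot\tau^{-1}}=\sqrt n$ is a per-sample factor that still gets multiplied by $\sigma/\varepsilon$, whereas $\sqrt n\,\tau^{-1/2}$ can only arise as a one-time cost.

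The missing idea is to build a \emph{classical} reference vector $r$ once, for all $n$ coordinates simultaneously, with subset multidimensional amplitude estimation (Lemma~\ref{lem: Multidimensional Amplitude Estimation}) in two stages. First group the labels by dyadic scale $\epsilon_j$. Then re-estimate each group $S_j$ to accuracy $\sqrt{\epsilon_j}/(16T)$ using mass bound $2|S_j|\epsilon_j$. This costs $\widetilde O(T\sqrt{|S_j|})$ per group and $\widetilde O(\sqrt n\,T)$ in total by Cauchy--Schwarz. After that, $r_I^{q-1}$ is a table lookup costing no queries, so mean estimation costs only $O(n^{1-q}/\eta)$ and the two terms add.

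The second gap is the bias. A zeroth-order plug-in $\widehat p_i^{\,q-1}$ has bias linear in the estimation error, as your own Stage~2 computation shows, and changing the order of operations does nothing about it. Even with a classical $r$ satisfying $|r_i-p_i|\approx\sqrt{p_i}/T$, the uniform distribution gives bias about $(1-q)\sum_i p_i^{q-1}|r_i-p_i|\approx n^{3/2-q}/T$, which forces $T\gtrsim n^{3/2-q}/\varepsilon$.

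The paper instead estimates the tangent-line functional $G(r,P)=(1-q)\sum_i r_i^q+q\,\mathbb{E}_{I\sim P}[r_I^{q-1}]$. The first sum is computed classically; only the expectation needs quantum mean estimation. By Lemma~\ref{lem:tangent}, the error is quadratic:
$0\le G(r,P)-F_q(P)\le\sum_i r_i^{q-1}(p_i-r_i)^2/r_i\le(2\tau)^{q-1}\cdot 2n\tau\le 2n/T^{2q}$.
This step uses two properties of $r$:
\begin{itemize}
\item the floor $r_i\ge 2/T^2$, which replaces your truncation and also covers $p_i=0$;
\item the weighted bound $\sum_i(p_i-r_i)^2/r_i\le 2n/T^2$.
\end{itemize}
The quadratic remainder is what permits $T=\Theta((n/\varepsilon)^{1/(2q)})$ and yields the first term $\sqrt n\,T$.

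The variance bound must also be for $W=r_I^{q-1}$, not $p_I^{q-1}$. It follows from $p_i\le 2r_i$, $\sum_i r_i\le 4$, and concavity of $x^{2q-1}$. As a minor correction, amplitude estimation with $T$ queries gives error $O(\sqrt{p_i}/T+1/T^2)$, not $O(\sqrt{p_i/T}+1/T)$.
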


The first term accounts for constructing estimates of the
probabilities, and the second for quantum mean estimation.
Compared with the bound
$\widetilde{O}\left(\frac{n^{\frac{1}{q}-\frac{1}{2}}}{\varepsilon^{\frac{1}{q}}}\right)$
of Chen et al.~\cite{chen2026quantum}, our bound improves the
dependence on $\varepsilon$, at the cost of a larger exponent
of $n$ in the first term. Ignoring polylogarithmic factors,
it gives an improvement when $\varepsilon= O\rbra{n^{1-2q}}$.

Table~\ref{tab:bounds} summarizes our query bounds and compares them with existing results.

\begin{table*}[htbp]
\caption{Comparison of quantum query complexities for estimating the
$q$-Tsallis entropy to additive error $\varepsilon$ under purified quantum
access.}
\centering
\begin{tabular}{cccc}
\hline
\multicolumn{2}{c}{}
& Existing results 
& This work \\
\hline
& & & \\[-5pt]

\multirow{2}{*}{$0<q\leq1/2$}
& Upper bound
& $\widetilde{O}\left(\frac{n^{\frac{1}{q}-\frac{1}{2}}}{\varepsilon^{\frac{1}{q}}}\right)$\cite{chen2026quantum}
& $\widetilde{O}\left(
\left(\frac{n}{\varepsilon}\right)^{\frac{1}{2q}}
\right)$
(Theorem~\ref{thm: low})
\\[5pt]

& Lower bound
& $\widetilde{\Omega}\rbra{
\left(\frac{n}{\varepsilon}\right)^{\frac{1}{2q}}}$\cite{jiao2015minimax}
& -- \\
[8pt]

\hline
& & & \\[-5pt]

\multirow{2}{*}{$1/2<q<1$}
& Upper bound
& $\widetilde{O}\left(\frac{n^{\frac{1}{q}-\frac{1}{2}}}{\varepsilon^{\frac{1}{q}}}\right)$\cite{chen2026quantum}
& $\widetilde{O}\rbra{
\frac{n^{\frac{1}{2}+\frac{1}{2q}}}{\varepsilon^{\frac{1}{2q}}}
+\frac{n^{1-q}}{\varepsilon}}$
(Theorem~\ref{thm:high})
\\[5pt]

& Lower bound
& $\widetilde{\Omega}\rbra{
\left(\frac{n}{\varepsilon}\right)^{\frac{1}{2q}}
+\frac{n^{1-q}}{\varepsilon}}$\cite{jiao2015minimax}
& -- \\
[5pt]

\hline
\end{tabular}
\label{tab:bounds}
\end{table*}

\subsection{Techniques overview}

To outline the main technical ideas, we consider two parameter regimes separately. We first explain how the multilevel framework yields a query complexity that is optimal up to polylogarithmic factors for $0<q\leq\frac12$, and why it incurs a higher query cost for $\frac12<q<1$. We then present the key ingredients of our algorithm and analysis for $\frac12<q<1$, showing how they lead to an improved query complexity.

\subsubsection{Proof techniques for $0<q\leq\frac12$}

We adapt the multilevel framework of \cite{chen2026quantum} through uniform reweighting, which yields the optimal query bound for $0<q\leq\frac12$.
We then discuss why this approach incurs a higher query cost
when $\frac12<q<1$.
\paragraph{Uniform reweighting for $0<q\leq\frac12$.}
A natural starting point is that
\[
F_q(P)=\sum_{i=1}^n p_i\,p_i^{q-1}.
\]
With distribution-weighted initialization, a QSVT-based estimator
encodes a normalized approximation of $p_i^{q-1}$ in the amplitudes. Since the corresponding singular value is proportional
to $\sqrt{p_i}$, the required polynomial approximates a function
proportional to $x^{q-1}$.
For $0<q<1$, this function diverges at the origin, whereas a QSVT
polynomial must remain bounded on $[-1,1]$.
The multilevel framework of \cite{chen2026quantum} handles this
difficulty by applying separate transformations at different
probability scales. However, the normalization required at small
scales introduces rescaling factors that amplify the estimation
error.

Our observation is that changing the input weights changes the
function that QSVT must approximate. We replace distribution-weighted
initialization with the uniform state
\[
\ket{u}=\frac{1}{\sqrt n}\sum_{i=1}^n\ket{\ell_i}.
\]
Each label now receives weight $1/n$, so the local transformations
need only approximate a normalized version of $x^q$.
More precisely, within the $j$-th probability interval, we choose
a polynomial $Q_j$ and a rescaling factor $\Lambda_j$ such that 
$
\Lambda_j Q_j(\sqrt{p_i}/2)^2\approx p_i^q
$. Unlike $x^{q-1}$, the function $x^q$ is bounded and vanishes
at the origin. Uniform initialization also assigns greater weight
to labels with $p_i<1/n$, allowing their contribution to the
power sum to be resolved more efficiently.

Let $t_j$ denote the geometrically decreasing probability scales.
The resulting query cost is bounded by $\widetilde{O}\left(
\frac{n}{\varepsilon}\sum_{j=1}^m t_j^{q-1/2}
\right)$. We choose the finest scale as $t_m=\Theta\left(
\left(\frac{\varepsilon}{n}\right)^{1/q}
\right)$, so that probabilities below this scale contribute only
$O(\varepsilon)$ to the power sum.
For $q\leq \frac12$, the geometric sum is dominated by its finest
scale; for $q=\frac12$, it contributes only the number of levels.
This gives the query bound
\[
\widetilde{O}\left(
\left(\frac{n}{\varepsilon}\right)^{\frac{1}{2q}}
\right).
\]
\paragraph{The limitation of uniform reweighting for $q>1/2$.}
Uniform initialization also introduces a cost: converting
an expectation under the uniform distribution into the power
sum requires multiplication by $n$, which amplifies the
estimation error.
For $q>1/2$, the geometric sum above is no longer dominated
by the finest scale. A more refined analysis that accounts
for the acceptance probabilities gives $\widetilde{O}\!\left(
\frac{n^{3/2-q}}{\varepsilon}
\right)$. This bound is a factor of $\sqrt n$ larger than
$\frac{n^{1-q}}{\varepsilon}$. Thus, removing the singular QSVT
transformation does not by itself give the desired dependence
on the accuracy in this regime.
This motivates a different approach: first construct classical
estimates of the probabilities, and then use them in quantum
mean estimation.

\subsubsection{Proof techniques for $\frac12<q<1$}

The limitations of uniform reweighting motivate a different algorithm
for this regime. Our approach combines estimates of each probability
$p_i$ with quantum mean estimation. We outline its main components
and analysis below, beginning with a two-stage simultaneous probability
estimation procedure.

\paragraph{Two-stage simultaneous probability estimation.} Our probability estimation procedure builds on multidimensional
amplitude estimation~\cite{van2021quantum}.
A single global additive accuracy does not account for the
different precision requirements at different probability scales.
We therefore use two stages: the first obtains coarse estimates
and groups probabilities of comparable magnitude, while the
second refines the estimates within each group.
The resulting guarantee is summarized below.

\begin{theorem}[Two-stage simultaneous probability estimation, informal version
of Theorem~\ref{the: Two-stage multidimensional amplitude estimation}]
\label{thm:two-stage-informal}
Let $T\geq\sqrt n$.
Given purified quantum query access to a distribution $P$ on $[n]$,
there exists a quantum algorithm that uses
$\widetilde{O}(\sqrt n\,T)$ queries and outputs a classical vector
$r=(r_1,\ldots,r_n)$ such that, with probability at least $2/3$,
the following bounds hold simultaneously for every $i$:
\[
|r_i-p_i|=
\begin{cases}
O\!\left(\dfrac{\sqrt{p_i}}{T}\right),
& p_i\geq T^{-2},\\[6pt]
O(T^{-2}),
& p_i<T^{-2}.
\end{cases}
\]
Moreover,
\[
r_i=\Omega(T^{-2}),\qquad
p_i\leq2r_i,\qquad
\sum_{i=1}^n r_i=O(1),
\]
and
\[
\sum_{i=1}^n\frac{(p_i-r_i)^2}{r_i}
=O\!\left(\frac{n}{T^2}\right).
\]
\end{theorem}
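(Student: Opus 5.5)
We build the estimator from multidimensional amplitude estimation of van Apeldoorn~\cite{van2021quantum}. We take as a black box that, given purified access to $P$ and a parameter $T'$, it outputs with $\widetilde{O}(\sqrt{n}\,T')$ queries (up to logarithmic overhead for confidence amplification) a vector $\tilde p$ with $|\tilde p_i-p_i|=O(\sqrt{p_i}/T'+1/T'^2)$ simultaneously for all $i$, with probability at least $5/6$. Invoking this directly with $T'=T$ already gives the two-case accuracy bound $O(\sqrt{p_i}/T)$ for $p_i\ge T^{-2}$ and $O(T^{-2})$ otherwise. The reason for the two-stage design is the additional structural guarantees ($r_i=\Omega(T^{-2})$, $p_i\le 2r_i$, $\sum_i r_i=O(1)$, and the $\chi^2$-type bound), and keeping the total cost at $\widetilde{O}(\sqrt n T)$ rather than paying extra logarithmic factors per scale.

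\textbf{Stage one.} Run the procedure with $T'=\Theta(T)$ to obtain coarse estimates $\tilde p_i$. Bucket the indices into dyadic groups $G_k=\{i:\tilde p_i\in(2^{-k-1},2^{-k}]\}$ for $k\le 2\log T$, and put all indices with $\tilde p_i\lesssim T^{-2}$ into a residual group. By the stage-one error bound, every $i\in G_k$ has $p_i=\Theta(2^{-k})$ whenever $2^{-k}\gg T^{-2}$.

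\textbf{Stage two.} For each group, refine the estimates by restricting attention to that group: mark the labels in $G_k$ and apply amplitude estimation relative to the group (equivalently, multidimensional estimation with precision tuned to the scale $2^{-k}$). The aim is relative error $O(\sqrt{p_i}/T)$ inside every group, and the $O(\log T)$ groups are handled with a union bound. Then define
\[
r_i=\max\{\hat p_i+c\sqrt{\hat p_i}/T+cT^{-2},\;cT^{-2}\}.
\]
This choice makes $r_i=\Omega(T^{-2})$ and $r_i\ge p_i/2$ (in fact $r_i\ge p_i$) hold by construction. For the sum, $\sum_i r_i\le \sum_i p_i+O(\sum_i\sqrt{p_i}/T+n/T^2)$. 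Cauchy--Schwarz gives $\sum_i\sqrt{p_i}\le\sqrt n$, so the correction is $O(\sqrt n/T+n/T^2)=O(1)$ because $T\ge\sqrt n$. For the $\chi^2$-type bound we split by case:
\begin{itemize}
\item if $p_i\ge T^{-2}$, then $r_i=\Theta(p_i)$ and $(p_i-r_i)^2=O(p_i/T^2)$, so each term is $O(1/T^2)$;
\item if $p_i<T^{-2}$, then $r_i=\Theta(T^{-2})$ and $(p_i-r_i)^2=O(T^{-4})$, so each term is again $O(1/T^2)$.
\end{itemize}
Summing over $n$ indices gives $O(n/T^2)$.

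\textbf{Main obstacle.} The hard part is the query accounting in stage two. Spending $\widetilde{O}(\sqrt n\,T)$ on each of $\log T$ scales would only cost a log factor, which is acceptable in $\widetilde{O}$. The real difficulty is showing the per-group refinement achieves $O(\sqrt{p_i}/T)$ with cost $\widetilde{O}(\sqrt{|G_k|}\,T\sqrt{2^{-k}}\cdot\sqrt{2^{k}\,\Pr[G_k]})$ or a comparable expression. That requires invoking the multidimensional estimator on the conditional distribution over $G_k$, whose amplitudes are $\sqrt{p_i/P(G_k)}$, and then controlling the factor $P(G_k)=O(|G_k|2^{-k})$. Our first attempt would be to apply the Van Apeldoorn bound with the $\ell_2$-type guarantee and rescale, so that the cost sums to $\widetilde{O}(\sqrt n T)$ via $\sum_k\sqrt{|G_k|}\le\sqrt{n\log T}$. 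If this refinement turns out to be unnecessary, the single-stage bound combined with the explicit construction of $r_i$ above already gives every stated property.
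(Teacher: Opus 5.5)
Your final step is sound. If you already had $|\hat p_i-p_i|=O(\sqrt{p_i}/T+T^{-2})$ for all $i$ at cost $\widetilde O(\sqrt n\,T)$, your inflated choice of $r_i$ would give $r_i\ge p_i$, $r_i=\Omega(T^{-2})$, $\sum_i r_i=O(1)$ and the $\chi^2$-type bound.

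The gap is that this black box is essentially the statement itself. The primitive actually available, Lemma~\ref{lem: Multidimensional Amplitude Estimation}, gives only a uniform additive error $\epsilon$ on a subset $S$. Its cost is $O(\sqrt{M}\log(1/(\epsilon\delta))/\epsilon)$ with $M\ge\sum_{i\in S}p_i$, and there is no $\sqrt{p_i}$-dependent error. Used once on $[n]$ (so $M=1$), resolving coordinates near $T^{-2}$ forces $\epsilon=\Theta(T^{-2})$, hence cost $\Theta(T^2)$. That exceeds $\sqrt n\,T$ by a factor $T/\sqrt n$. Converting this uniform-accuracy primitive into scale-adapted accuracy within $\widetilde O(\sqrt n\,T)$ queries is exactly the content of the theorem, and it is the step you flag as the main obstacle and leave open. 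Your Stage one also presupposes the black box: with the real primitive, one-shot dyadic bucketing down to scale $T^{-2}$ again needs accuracy $\Theta(T^{-2})$ and costs $\Theta(T^2)$.

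The missing idea is to exploit the mass parameter $M$ in both stages.

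\textbf{Grouping.} Peel off labels scale by scale. At $\epsilon_j=2^{-j}$, call the estimator only on the still-unassigned set $R_j$, with accuracy $\epsilon_j/8$, and select $S_j=\{i\in R_j:\tilde p_i\ge\epsilon_j/2\}$. By induction every $i\in R_j$ has $p_i\le2\epsilon_j$, so $M_j=\min\{1,2n\epsilon_j\}$ is a valid mass bound. The call then costs $\widetilde O(\sqrt{M_j}/\epsilon_j)$, which sums over $j$ to $\widetilde O(n+\sqrt n\,T)=\widetilde O(\sqrt n\,T)$. Each selected label satisfies $3\epsilon_j/8\le p_i\le2\epsilon_j$, and the labels left at the end have $p_i<5T^{-2}/4$.

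\textbf{Refinement.} Call the estimator on $S_j$ with mass bound $2|S_j|\epsilon_j$ and accuracy $\sqrt{\epsilon_j}/(16T)$, at cost $\widetilde O(T\sqrt{|S_j|})$. Summing with Cauchy--Schwarz, $\sum_j\sqrt{|S_j|}\le\sqrt{Jn}$ where $J=\log_2T^2$, gives $\widetilde O(\sqrt n\,T)$ for this stage as well. No conditional distribution, amplitude amplification or $\ell_2$ rescaling is needed; the $\sqrt M$ factor in the lemma does that work.

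With $r_i=\max\{\hat p_i,2T^{-2}\}$ and $p_i=\Theta(\epsilon_j)$ on $S_j$, the four properties then follow by the kind of case analysis in your last paragraph.
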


The parameter $T$ controls both the query cost and the precision.
For probabilities above the cutoff $T^{-2}$, the additive error
scales as $\sqrt{p_i}/T$.
Below the cutoff, an additive error of order $T^{-2}$ suffices.
We also impose a floor of this order on the estimates, which
controls the negative powers of $r_i$ used in the entropy estimator.
The vector $r$ need not be a normalized probability distribution.

\paragraph{Estimating a correction under the input distribution.}
For $\frac12<q<1$, we use the probability estimates in a
first-order approximation, inspired by the approach
of \cite{chen2026nearly}.
Expanding each $p_i^q$ around $r_i$ gives
\[
p_i^q
\approx r_i^q+q r_i^{q-1}(p_i-r_i).
\]
Summing over $i$ yields
\[
F_q(P)\approx
(1-q)\sum_{i=1}^n r_i^q
+q\sum_{i=1}^n p_i r_i^{q-1}.
\]
The first term can be computed classically.
The second is an expectation under the original distribution:
\[
\sum_{i=1}^n p_i r_i^{q-1}
=\mathbb{E}_{I\sim P}[r_I^{q-1}],
\]
which we estimate using quantum mean estimation.

The guarantees of the probability estimation procedure control
both the approximation error and the cost of estimating this
expectation.
In particular, the approximation error satisfies
\[
0\leq
(1-q)\sum_{i=1}^n r_i^q
+q\sum_{i=1}^n p_i r_i^{q-1}
-F_q(P)
\leq O\left(\frac{n}{T^{2q}}\right).
\]
Choosing
\[
T=\Theta\left(
\left(\frac{n}{\varepsilon}\right)^{\frac{1}{2q}}
\right)
\]
makes this error sufficiently small.
The cost of constructing $r$ is therefore
\[
\widetilde{O}\left(
\frac{n^{\frac{1}{2}+\frac{1}{2q}}}{\varepsilon^{\frac{1}{2q}}}
\right).
\]

Furthermore, the bounds $p_i\leq2r_i$ and $\sum_i r_i=O(1)$
imply
\[
\operatorname{Var}_{I\sim P}(r_I^{q-1})
=O(n^{2-2q}).
\]
Quantum mean estimation consequently uses
$O(\frac{n^{1-q}}{\varepsilon})$ queries.
Combining the two costs gives
\[
\widetilde{O}\left(
\frac{n^{\frac{1}{2}+\frac{1}{2q}}}{\varepsilon^{\frac{1}{2q}}}
+\frac{n^{1-q}}{\varepsilon}
\right),
\]
as stated in Theorem~\ref{thm:high}.
Here, the conversion from power-sum error to Tsallis entropy
error is absorbed into the constants depending on $q$.

\subsection{Related work}

Quantum distribution testing includes algorithms for uniformity testing, closeness testing and testing similar means
\cite{chakraborty2010new,bravyi2011quantum,montanaro2015quantum,
gilyen2020distributional,luo2024succinct,canonne_et_al:LIPIcs.TQC.2025.7,gao2026quantum}.
These results establish quantum advantages for learning global
properties of distributions under suitable oracle access.
A related task is to estimate
numerical functionals of a distribution to a prescribed additive accuracy.

Under purified quantum query access, prior work on distributions includes estimators for R\'enyi entropy \cite{li2018quantum,wang2024quantum}, Shannon entropy \cite{gur2021sublinear,shin2025near}, and Tsallis entropy of integer orders \cite{wang2025information} and non-integer orders greater than two \cite{liu2026estimating}, as well as the multilevel framework of \cite{chen2026quantum}. Entropy estimation of quantum states has also been investigated in terms of query complexity \cite{subramanian2021quantum,wang2024new,chen2025list} and sample complexity \cite{acharya2020estimating,wang2025time,gao2026breaking,chen2026simultaneous,chen2025improved}. Our focus is on estimating the Tsallis entropy of classical distributions for $0<q<1$ under purified quantum query access. For a unified treatment, we refer readers to \cite{wang2026unified}.

\section{Preliminaries}
\subsection{Notations}

For a positive integer $n$, write $[n]=\{1,\ldots,n\}$.
Let $P=(p_1,\ldots,p_n)$ denote a probability distribution on $[n]$,
where $p_i\geq0$ and $\sum_{i=1}^n p_i=1$.
For a vector $x$, let $\|x\|_2$ and $\|x\|_\infty$ denote its
Euclidean and maximum norms, respectively. For a matrix $A$,
let $\|A\|$ denote its operator norm and $A^\dagger$ its adjoint.
We use $\widetilde{O}$ to suppress polylogarithmic factors in the
relevant problem parameters; a subscript $q$ indicates that the
implicit constants may depend on $q$.

\begin{definition}[Purified quantum access]
\label{def:access}
A probability distribution $P=(p_1,\ldots,p_n)$ is accessible
through purified quantum access if we are given a unitary $U_P$
satisfying
\begin{equation}
\label{eq:purified-access}
    U_P\ket{0}_G\ket{0}_I
    =
    \sum_{i=1}^n \sqrt{p_i}\ket{\phi_i}_G\ket{i}_I,
\end{equation}
where $G$ is an auxiliary register, $I$ is the outcome register,
and each $\ket{\phi_i}$ is a normalized quantum state.
\end{definition}

Throughout this paper, query complexity counts applications of
$U_P$, $U_P^\dagger$, and their controlled versions.

\begin{definition}[Hermitian dilation]
\label{def: Hermitian dilation}
Let $U$ be a unitary, and let $\Pi$ and $\widetilde{\Pi}$ be
orthogonal projectors such that
\[
    \widetilde{\Pi}U\Pi=A=
    \sum_{i=1}^{d}
    \sigma_i\ket{\widetilde{\psi}_i}\bra{\psi_i},
\]
where $d=\operatorname{rank}(A)$, $\sigma_i>0$, and
$\ket{\psi_i}$ and $\ket{\widetilde{\psi}_i}$ are the
corresponding right and left singular vectors.

Introduce an auxiliary qubit $h$ and define
\begin{align*}
    U_H
    &=
    \ket{0}\bra{1}_h\otimes U
    +
    \ket{1}\bra{0}_h\otimes U^\dagger,\\
    \Pi_H
    &=
    \ket{0}\bra{0}_h\otimes\widetilde{\Pi}
    +
    \ket{1}\bra{1}_h\otimes\Pi.
\end{align*}
Then $U_H$ is a Hermitian unitary, and its compression by
$\Pi_H$ is the Hermitian dilation of $A$:
\[
    \Pi_H U_H \Pi_H
    =
    \begin{bmatrix}
        0 & A\\
        A^\dagger & 0
    \end{bmatrix}.
\]
For each $i\in[d]$ and $\nu\in\{-1,+1\}$, define
\[
    \ket{\chi_{i,\nu}}
    =
    \frac{1}{\sqrt{2}}
    \left(
        \ket{0}_h\ket{\widetilde{\psi}_i}
        +
        \nu\ket{1}_h\ket{\psi_i}
    \right).
\]
These states satisfy
\[
    \Pi_H U_H \Pi_H\ket{\chi_{i,\nu}}
    =
    \nu\sigma_i\ket{\chi_{i,\nu}}.
\]
Consequently,
\[
    \Pi_H U_H \Pi_H
    =
    \sum_{i=1}^{d}\sum_{\nu\in\{-1,+1\}}
    \nu\sigma_i
    \ket{\chi_{i,\nu}}\bra{\chi_{i,\nu}},
\]
with any remaining eigenvectors having eigenvalue zero.
\end{definition}

\subsection{Quantum subroutines}

We recall the quantum subroutines used in our algorithms.
We begin with subset multidimensional amplitude estimation,
which allows us to estimate several probabilities simultaneously
and serves as the main building block of our two-stage procedure.

\begin{lemma}[Subset multidimensional amplitude estimation, Theorem 9 in \cite{van2021quantum}]\label{lem: Multidimensional Amplitude Estimation}
Let $P$ be a distribution on $[n]$ with the purified quantum access $U_P$. Let $S\subseteq[n]$ and $M\geq\sum_{i\in S}p_i$ with $0<M\leq 1$. Let $0<\epsilon, \delta<1$.
There exists a quantum algorithm $\textbf{SubsetEst}(U_P,S, M, \epsilon,\delta)$ that outputs a distribution $\tilde{P}$ such that $\Abs{\tilde{P}-P}_\infty\leq \epsilon$ with failure probability at most $\delta$ using
\begin{equation}\label{eq:subset}
 O\left(\frac{\sqrt M\ln{\frac{1}{\epsilon\delta}}}{\epsilon}\right)
\end{equation}
queries to $U_P$.
\end{lemma}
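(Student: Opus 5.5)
The statement is Lemma~\ref{lem: Multidimensional Amplitude Estimation}, which is cited from Theorem~9 of \cite{van2021quantum}. My plan is to reduce it to the phase-estimation-based multidimensional amplitude estimation of that work, applied to a state whose amplitudes encode $p_i$ for $i\in S$. The construction has three ingredients.

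\textbf{Step 1: restrict to $S$.} Using one query to $U_P$ together with a reversible membership test for $S$ on the $I$ register, I prepare
\[
\ket{\Psi}=\sum_{i\in S}\sqrt{p_i}\ket{\phi_i}_G\ket{i}_I\ket{0}_f+\ket{\mathrm{rest}}\ket{1}_f .
\]
The squared norm of the flagged-$0$ part is $\sum_{i\in S}p_i\le M$.

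\textbf{Step 2: estimate the amplitudes.} I first build a probability oracle: a unitary that, conditioned on $i$, rotates an ancilla so that the amplitude on the pair $(i,\text{good})$ equals $\sqrt{p_i}$. Applying $U_P$ and then uncomputing via reflections supplies these probability oracles for all $i\in S$ in superposition. I then follow van Apeldoorn's gradient-style approach, in the spirit of Jordan's algorithm and Gily\'en--Arunachalam--Wiebe. Concretely, I implement a phase oracle $\ket{x}\mapsto e^{i\sum_{i\in S}x_i p_i/\epsilon'}\ket{x}$ on a grid of $x$ values. Its cost comes from block-encoding $\sum_{i} x_i p_i$, which is controlled by $\sqrt{\sum_i p_i}\le\sqrt{M}$, combined with the fact that the $x_i$ range over a bounded box. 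A quantum Fourier transform per coordinate then reads off every $p_i$ to additive error $\epsilon$. The normalization by $\sqrt M$ is what produces the factor $\sqrt{M}/\epsilon$ in the query count.

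\textbf{Step 3: boost success probability.} I repeat the procedure $O(\ln(1/\delta))$ times and take the coordinate-wise median. To keep the overall query count at $\sqrt{M}\,\ln(1/(\epsilon\delta))/\epsilon$ rather than paying an extra factor of $\ln n$ from a union bound over coordinates, I use the $\ell_\infty$ tail bound from \cite{van2021quantum} directly.

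\textbf{Main obstacle.} The hardest step is Step~2: achieving $\ell_\infty$ error $\epsilon$ simultaneously for all coordinates with only $\sqrt{M}/\epsilon$ queries, rather than $n/\epsilon$ or $\sqrt{n}/\epsilon$. This requires the careful fractional-phase-oracle analysis of van Apeldoorn. Since the paper treats that theorem as a black box, I would simply invoke Theorem~9 of \cite{van2021quantum}, specialized to the state prepared in Step~1.
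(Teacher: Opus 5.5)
Your proposal matches the paper: the paper gives no argument of its own for this lemma and simply imports Theorem~9 of \cite{van2021quantum} as a black box, which is exactly what you do in the end. Your Steps~1--3 only informally sketch what happens inside that theorem (the phase-oracle/Jordan-type estimation, the $\sqrt{M}$ normalization, the median boosting), so they are not load-bearing once the theorem is invoked, and their looseness does not affect correctness.
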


Once the probability estimates have been constructed, our
entropy estimator reduces the remaining computation to estimating
an expectation under the input distribution. For this step,
we use the following quantum mean estimation result.

\begin{lemma}[Quantum mean estimation, Theorem 1.1 in \cite{kothari2023mean}]\label{lem: mean estimation}
    Let $O_\Omega$ and $O_X$ be the quantum codes of a random variable $X$ on a finite probability space such that 
\begin{align*}
\begin{split}
    &O_\Omega\ket{\Vec{0}}=\sum_{\omega\in\Omega}\sqrt{p(\omega)}\ket{\omega}\ket{\psi_\omega},\\
    &O_X\ket{\omega}\ket{\Vec{0}}=\ket{\omega}\ket{x_\omega}\ket{\Vec{0}},
\end{split}
\end{align*}
where $\Omega$ is the domain of the probability distribution of $X$, $p:\Omega \to\sbra{0,1}$ 
is the probability distribution of $X$ and $\cbra{\ket{\psi_\omega}}$ are arbitrary normalized quantum states. Given $\epsilon,\delta\in\rbra{0,1}$, there exists a quantum algorithm that outputs an estimate $\tilde{\mu}$ such that $\Pr[|\tilde{\mu}-\mu|\geq \sigma\epsilon]\leq \delta$, where $\mu=\mathbb{E}\sbra{X}$ and $\sigma$ are the mean and standard deviation of $X$, respectively. The algorithm $O\rbra{\frac{\log\frac{1}{\delta}}{\epsilon}}$ applications of $O_X$ and $O_X^\dagger$.
\end{lemma}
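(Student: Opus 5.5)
Since this lemma is quoted from \cite{kothari2023mean}, the plan is to reconstruct its argument rather than derive it from earlier results in this paper. We prove a constant-success version (say $\delta=1/3$) and then boost it. Running $O(\log\frac1\delta)$ independent copies and taking the median drives the failure probability to $\delta$, by a Chernoff bound on the number of copies that fail. So it suffices to find an estimate within $\sigma\epsilon$ of $\mu$ with probability $2/3$, using $O(1/\epsilon)$ applications of $O_\Omega$, $O_X$ and their inverses.

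\textbf{Step 1: centering and rescaling.} Using $O(1)$ coherent samples (prepare $O_\Omega\ket{\vec 0}$, apply $O_X$, measure), take a median-of-means estimate $\hat\mu_0$. By Chebyshev, $|\hat\mu_0-\mu|\le\sigma$ with constant probability. Replace $X$ by $Y=X-\hat\mu_0$; then $\mathbb{E}[Y^2]\le 2\sigma^2$. We do not know $\sigma$, so we also estimate a scale $\hat\sigma$ with $\hat\sigma=\Theta(\sigma)$ and divide by it. One option is a quantum median-of-$|Y|$ style estimate; alternatively we try geometrically decreasing scales and stop at the first consistent one, at the cost of a constant factor. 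After this, $\mathbb{E}[Y]$ is to be estimated to additive error $\epsilon$, with $\mathbb{E}[Y^2]=O(1)$.

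\textbf{Step 2: truncation.} Clip $Y$ to $[-1/\epsilon,1/\epsilon]$, giving $Y'$. Since $\mathbb{E}[Y^2]=O(1)$, Cauchy--Schwarz/Chebyshev gives
\[
|\mathbb{E}[Y]-\mathbb{E}[Y']|\le \mathbb{E}\bigl[|Y|\mathbf 1_{|Y|>1/\epsilon}\bigr]\le \epsilon\,\mathbb{E}[Y^2]=O(\epsilon).
\]

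\textbf{Step 3: estimating $\mathbb{E}[Y']$ with $O(1/\epsilon)$ queries.} This is the heart of the proof and the main obstacle. The naive route splits $Y'$ into dyadic levels $2^k\le|Y'|<2^{k+1}$ and runs amplitude estimation on each level, as in Montanaro or Hamoudi. That costs an extra $\log(1/\epsilon)$ factor, because of the level count and the per-level failure probabilities. To avoid it, I would follow the Kothari--O'Donnell idea and build, from one use each of $O_\Omega$, $O_X$ and their inverses, a phase oracle $\ket{\omega}\mapsto e^{itY'_\omega}\ket{\omega}$ for small $t$. Combined with reflections about $O_\Omega\ket{\vec 0}$, this gives a Grover-type unitary whose rotation angle is approximately $t\,\mathbb{E}[Y']$, up to a second-order error $O(t^2\mathbb{E}[Y'^2])$.

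We then use phase estimation at a sequence of scales $t_k$ that grow geometrically, with repetition counts that shrink geometrically. Each stage refines the previous estimate and re-centers $Y'$ so that the effective second moment stays $O(1)$. The total cost then telescopes to $\sum_k O(2^{-k}/\epsilon)=O(1/\epsilon)$. The delicate points are: bounding the nonlinearity error of the phase oracle at each scale, which uses $|e^{ix}-1-ix|\le x^2/2$ together with the truncation to $|Y'|\le 1/\epsilon$; and making each stage's failure probability decay geometrically, so the union bound sums to a constant without extra logarithms. I expect this multiscale bookkeeping to be the main difficulty.

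Finally, adding the errors from Steps 1--3 and undoing the scaling by $\hat\sigma=\Theta(\sigma)$ and the shift by $\hat\mu_0$ gives $|\tilde\mu-\mu|\le\sigma\epsilon$ with constant probability. The median amplification then yields the stated $O\bigl(\log\frac1\delta/\epsilon\bigr)$ bound.
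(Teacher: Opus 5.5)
The paper gives no proof of this lemma; it imports Theorem~1.1 of \cite{kothari2023mean}, so your reconstruction has to stand on its own. The median boosting and the truncation bound of Step~2 are fine. The gap is in Step~3, which carries the entire $O(1/\epsilon)$ claim, and the problem is not bookkeeping.

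Take your own estimate that the angle is $t\,\mathbb{E}[Y']+O(t^2\mathbb{E}[Y'^2])$. Reading off $\mathbb{E}[Y']$ to accuracy $\epsilon$ forces $t\lesssim\epsilon/\mathbb{E}[Y'^2]$. Since $\mathbb{E}[Y'^2]\ge\mathrm{Var}(Y')$, which is in general $\Theta(1)$ after your normalization, this gives $t=O(\epsilon)$. Phase estimation to angular accuracy $t\epsilon=O(\epsilon^2)$ then costs $\Omega(\epsilon^{-2})$ applications, which is the classical rate. Re-centering changes $\mathbb{E}[Y']$ but not $\mathrm{Var}(Y')$. So a round that improves accuracy from $\Delta$ to $\Delta/2$ still needs $t\lesssim\Delta$ and about $\Delta^{-2}$ applications, and your telescoping sum $\sum_k 2^{-k}/\epsilon$ does not follow from any per-round bound.

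This is not an artifact of the crude expansion. Consider $G=(2\ket{\Psi}\bra{\Psi}-I)\,\mathrm{diag}(e^{\mathrm{i}\theta_\omega})$. The eigenphases on which $\ket{\Psi}$ has weight are exactly the roots of $\mathbb{E}[\tan((\lambda-\theta_\omega)/2)]=0$. With $\theta_\omega=tY_\omega$ and $Z=Y-\mu$, the root near $0$ is $t\mu+\tfrac{t^3}{12}\mathbb{E}[Z^3]+\cdots$ when $t\max|Z|$ is small. The correction is governed by a shift-invariant central moment, so re-centering cannot remove it. For a centred $Z$ equal to about $1/\epsilon$ with probability $\epsilon^2$ (unit variance, untouched by your truncation), the correction is about $t^3/(12\epsilon)$. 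This already reaches the resolution $t\epsilon$ at $t\asymp\epsilon$, and for larger $t$ the phase of that atom is not small at all.

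What is missing is a choice of phases, together with an exact spectral analysis of this complex-phase Grover iterate, in which the quantity read by phase estimation is tied to $\mu$ without higher-moment bias. For instance, bounded phases $\theta_\omega=2\arctan(sY_\omega)$ make the left-hand side equal $-s\mu$ at $\lambda=0$, so the central eigenphase has exactly the sign of $\mu$. Even then you would still need overlap bounds for $\ket{\Psi}$ and a refinement schedule with geometrically allocated cost and failure probability. That analysis is what the cited result supplies, and your plan leaves it unaddressed.

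Step~1 has a second gap: $\hat\sigma=\Theta(\sigma)$ cannot be obtained within the budget. The variables $X\equiv0$ and $X=p^{-1/2}$ with probability $p$ (else $0$) have $\sigma=0$ and $\sigma\approx1$, yet need $\Omega(p^{-1/2})$ queries to distinguish. So for $p\ll\epsilon^2$, no $O(1/\epsilon)$-query procedure gives a constant-factor scale. The median of $|Y|$ fails on exactly this example, and ``the first consistent scale'' is not defined. If $\hat\sigma$ underestimates $\sigma$, your Step~2 bias bound $\epsilon\,\mathbb{E}[Y^2]$ (in units of $\hat\sigma$) is no longer $O(\epsilon)$. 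The usual remedy sets the truncation level from a tail quantile, $\Pr[|Y|>b]\le\epsilon^2$, and bounds the bias by Cauchy--Schwarz against the unknown $\sigma$.

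Finally, the paper uses this lemma only with the known bound $\sigma_W\le B_qn^{1-q}$ and inside $\widetilde O(\cdot)$ statements. The dyadic-level estimator you set aside for its logarithmic overhead would therefore suffice for the paper's purposes, though it does not prove the lemma as stated.
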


We also recall the following amplitude estimation guarantee.

\begin{lemma}[Quantum amplitude estimation, Corollary 2.2 in \cite{chen2026quantum}]\label{lem: amplitude estimation}
    Let $A$ be a unitary and let $P$ be a projector such that $A|0\rangle = \sqrt{a}|\Psi_0\rangle + \sqrt{1-a}|\Psi_1\rangle$, where $a \in [0,1]$ and $P|\Psi_0\rangle=|\Psi_0\rangle$, $P|\Psi_1\rangle=0$. For $\epsilon,\delta\in\rbra{0,1}$, there exists a quantum algorithm that outputs $\hat{a}$ satisfying
    \begin{align}
        \Pr\left[|\hat{a} - a| \leq \epsilon\right] \geq 1-\delta.
    \end{align}
    The algorithm uses $O\rbra{\rbra{\frac{\sqrt{a}}{\epsilon}+\frac{1}{\sqrt{\epsilon}}}\log\frac{1}{\delta}}$ applications of $A$ and $P$.
\end{lemma}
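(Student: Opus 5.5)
The plan is to reduce the statement to the textbook amplitude estimation guarantee of Brassard, H{\o}yer, Mosca and Tapp. We then remove its dependence on the unknown $a$ with a coarse-then-fine two-stage procedure. Throughout, let $Q=-AS_0A^\dagger S_P$ be the Grover iterate built from $A$ and the reflections about $\ket{0}$ and about the range of $P$; each application of $Q$ uses $O(1)$ applications of $A$ and $P$.

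The standard phase-estimation-based algorithm with $M$ uses of $Q$ outputs $\hat a$ such that, with probability at least $8/\pi^2$,
\[
|\hat a-a|\leq \frac{2\pi\sqrt{a(1-a)}}{M}+\frac{\pi^2}{M^2}.
\]
If $a$ were known, we would take $M=\lceil 4\pi\sqrt a/\varepsilon+\pi\sqrt{2/\varepsilon}\rceil$. Then each term is at most $\varepsilon/2$, and $M=O(\sqrt a/\varepsilon+1/\sqrt\varepsilon)$ as required. The only difficulty is that $a$ is unknown, and I expect this to be the main (though mild) obstacle. Guessing $a\le 1$ would give the wrong cost $1/\varepsilon$.

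\emph{Stage 1 (coarse upper bound on $a$).} Run the basic algorithm with $M_1=\lceil C/\sqrt\varepsilon\rceil$ for a suitable constant $C$; this costs $O(1/\sqrt\varepsilon)$. By AM--GM,
\[
\frac{2\pi\sqrt{a}}{M_1}\leq \frac{2\pi\sqrt{a\varepsilon}}{C}\leq \frac a2+\frac{2\pi^2\varepsilon}{C^2},
\]
so on success $|\hat a_1-a|\le a/2+O(\varepsilon)$. Hence $a\le 2\hat a_1+O(\varepsilon)$ and $\hat a_1\le \tfrac32 a+O(\varepsilon)$. Setting $\tilde a:=2\hat a_1+c\varepsilon$ for an appropriate constant $c$ therefore gives $a\le\tilde a=O(a+\varepsilon)$.

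\emph{Stage 2 (fine estimate).} Run the basic algorithm with $M_2=\lceil 4\pi\sqrt{\tilde a}/\varepsilon+\pi\sqrt{2/\varepsilon}\rceil$. Since $a\le\tilde a$, on success the error is at most $\varepsilon$. Moreover, $\sqrt{\tilde a}=O(\sqrt a+\sqrt\varepsilon)$, so
\[
M_2=O\!\left(\frac{\sqrt a}{\varepsilon}+\frac1{\sqrt\varepsilon}\right).
\]

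\emph{Boosting.} Each stage succeeds with constant probability $8/\pi^2>1/2$. Repeat each stage $O(\log(1/\delta))$ times and take the median. By a Chernoff bound, the median satisfies the same error guarantee except with probability $\delta/2$. A union bound over the two stages then gives overall success probability $1-\delta$.

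The total number of applications of $A$ and $P$ is $O\big((\sqrt a/\varepsilon+1/\sqrt\varepsilon)\log(1/\delta)\big)$. The only points needing care are the AM--GM step, which keeps the Stage 1 cost at $O(1/\sqrt\varepsilon)$ while still yielding an upper bound $\tilde a=O(a+\varepsilon)$, and checking that the median preserves the two-sided bounds used in both stages.
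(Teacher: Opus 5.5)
The paper gives no proof of this lemma; it imports it as Corollary~2.2 of Chen et al. Your derivation from the Brassard--H{\o}yer--Mosca--Tapp guarantee is therefore a self-contained replacement for a citation, and each step checks out. The AM--GM bound gives a Stage~1 error of $a/2+O(\epsilon)$ at cost $O(1/\sqrt{\epsilon})$, hence $a\le\tilde a=O(a+\epsilon)$. On that event, Stage~2 with $M_2$ chosen from $\tilde a$ is $\epsilon$-accurate because $\sqrt{a(1-a)}\le\sqrt{\tilde a}$. The median trick is valid because both target sets are intervals.

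One point needs to be stated explicitly: your final query count holds only on the event that Stage~1 succeeds. If Stage~1 fails (probability up to $\delta/2$), the median $\hat a_1$ can be close to $1$ even for tiny $a>0$. Stage~2 then costs $\Theta(\log(1/\delta)/\epsilon)$ rather than $O(\log(1/\delta)/\sqrt{\epsilon})$. What you actually prove is that, with probability at least $1-\delta$, the estimate is $\epsilon$-accurate \emph{and} at most $O((\sqrt a/\epsilon+1/\sqrt{\epsilon})\log(1/\delta))$ applications are used; the worst case is $O(\log(1/\delta)/\epsilon)$.

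This is not a flaw in your construction; it is the only achievable reading when $a$ is unknown. Take the instance $A_\theta\ket{0}=\cos\theta\ket{0}+\sin\theta\ket{1}$ with $P=\ket{1}\bra{1}$. For any algorithm, the probability of issuing more than $K$ queries is a sum of nonnegative trigonometric polynomials in $\theta$. So if it vanishes for all $a\le\epsilon$, it vanishes identically. An almost-sure bound of $O(\log(1/\delta)/\sqrt{\epsilon})$ at small $a$ would then also hold at $a=1/2$. For fixed $\delta$ and small $\epsilon$, this contradicts the $\Omega(1/\epsilon)$ lower bound there.

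The paper never needs the unknown-$a$ form. Section~\ref{sec:uniform} uses only $a_j\le 1$, and Remark~\ref{rem:dualhigh} uses the a priori bound $a_j\le N_j/n+O(\xi)$. With a known $\bar a\ge a$, you can skip Stage~1 and run Stage~2 with $\bar a$ in place of $\tilde a$. This gives a genuine worst-case bound $O((\sqrt{\bar a}/\epsilon+1/\sqrt{\epsilon})\log(1/\delta))$.
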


The following lemma gives a procedure for distinguishing
singular values separated by a prescribed gap.
\begin{lemma}[Singular value discrimination, Theorem 3.1 in \cite{chen2026quantum}]\label{lem: Singular value discrimination}
Let $\widetilde{\Pi}U\Pi=A=\sum_{i=1}^{d}\sigma_i
\lvert\widetilde{\psi}_i\rangle\langle\psi_i\rvert$
be a singular value decomposition of $A$ with $\lVert A\rVert\leq 1/2$, and let
$\lvert\phi_{i,\nu}\rangle$ denote the corresponding eigenvector of the
Hermitian dilation of $A$, where $\nu\in\{-1,1\}$. For any $\varepsilon_1,\varepsilon_2\in(0,1]$ and any threshold
$\gamma\in(0,1/2)$, there exists a
unitary $D$ that, up to precision $\varepsilon_1$, implements
$D(\lvert ++\rangle_{ab}\lvert0\rangle_c\lvert\phi_{i,\nu}\rangle)_s
=\lvert ++\rangle_{ab}\lvert\xi_{i,\nu}\rangle_c
\lvert\phi_{i,\nu}\rangle_s$
for every $i\in[d]$ and $\nu\in\{-1,1\}$. Moreover,
$\lvert\xi_{i,\nu}\rangle$ satisfies
\[
\begin{cases}
\left\|
\lvert\xi_{i,\nu}\rangle-\nu\lvert0\rangle
\right\|\leq\varepsilon_2,
& \text{if }\sigma_i\in[\gamma,1),\\[4pt]
\left\|
\lvert\xi_{i,\nu}\rangle-\mathrm{i}\lvert1\rangle
\right\|\leq\varepsilon_2,
& \text{if }\displaystyle
\sigma_i\in\left[0,\frac{\gamma}{2}\right].
\end{cases}
\]
The unitary $D$ can be implemented using
$O(\log(\frac{1}{\varepsilon_1})+\frac{1}{\gamma}\log(\frac{1}{\varepsilon_2}))$
queries to controlled versions of $U$ and $U^\dagger$.
\end{lemma}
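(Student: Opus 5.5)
The plan is to work entirely with the Hermitian dilation of Definition~\ref{def: Hermitian dilation}. Its eigenvectors $\ket{\phi_{i,\nu}}$ are exactly the states on which $D$ must act, and on them $\Pi_H U_H\Pi_H$ acts as the scalar $\nu\sigma_i$. The task then becomes a scalar problem. We need a single-ancilla unitary, controlled on the eigenvalue $x=\nu\sigma_i\in[-1/2,1/2]$, that maps $\ket{0}_c$ to
\[
S(x)\ket{0}+\mathrm{i}\sqrt{1-S(x)^2}\,\ket{1},
\]
where $S$ is a real odd polynomial. We want $|S(x)-\operatorname{sign}(x)|\leq\varepsilon_2/2$ for $|x|\geq\gamma$, and $|S(x)|\leq\varepsilon_2/2$ for $|x|\leq\gamma/2$. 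Then $\sigma_i\geq\gamma$ gives $\ket{\xi_{i,\nu}}\approx\nu\ket{0}$, because $S(\nu\sigma_i)\approx\nu$ and the $\ket1$ amplitude is $O(\sqrt{\varepsilon_2})$. This last loss is removed by running the construction with $\varepsilon_2^2$, which only changes constants inside the logarithm. Likewise $\sigma_i\leq\gamma/2$ gives $\ket{\xi_{i,\nu}}\approx\mathrm{i}\ket{1}$. Since $\|A\|\leq 1/2$, the relevant eigenvalues stay in $[-1/2,1/2]$, away from $\pm1$. This leaves room to require $|S|\leq 1$ on all of $[-1,1]$, and also $S$ bounded away from the extremes where needed for the square-root part.

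The first step is to construct $S$. We take a standard smoothed approximation of the ``window times sign'' function, e.g.\ $x\mapsto \operatorname{sign}(x)\cdot\mathbb{1}[|x|\geq 3\gamma/4]$ built from two shifted erf-based sign approximations (as in Gily\'en et al.\ and Low--Chuang). Transition width $\Theta(\gamma)$ and accuracy $\varepsilon_2$ give degree $O(\frac1\gamma\log\frac1{\varepsilon_2})$. We then odd-symmetrize it and rescale it so that $\sup_{[-1,1]}|S|\leq 1$.

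The second step is to realize the map on $\ket{0}_c$ via quantum signal processing / QSVT applied to $U_H$ with projector $\Pi_H$. QSP with a Hermitian block encoding yields a unitary whose $\ket0\bra0$ block is $P(x)$ and whose off-diagonal block is $\mathrm{i}Q(x)\sqrt{1-x^2}$, for any pair with $|P|^2+(1-x^2)|Q|^2=1$. We use this in the form where the ancilla $c$ carries the signal-processing qubit. The two ancillas $a,b$ in $\ket{++}$ are used for a linear-combination-of-unitaries step. It symmetrizes the complex QSP output to extract the real polynomial $S$, via $\frac12(W_\Phi+W_{-\Phi})$, and it also handles the $U_H$ versus $U_H^\dagger$ parity. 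After this step $\ket{++}_{ab}$ is returned (approximately) unchanged. The eigenvector register $s$ is untouched because everything is a function of $\Pi_HU_H\Pi_H$ acting diagonally on $\ket{\phi_{i,\nu}}$.

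The step I expect to be the main obstacle is making the ancilla output an exact unit vector with the prescribed phases ($\nu$ on $\ket0$, $+\mathrm{i}$ on $\ket1$) while keeping the garbage of the LCU and QSP out of $a,b$. This requires choosing a complementary polynomial for $\sqrt{1-S^2}$ that is itself a valid QSP component. Its existence is where $|S|\leq1$ and the margin from $\|A\|\leq1/2$ are used. Computing the QSP phase angles is then accurate only up to some numerical precision, which I would charge to $\varepsilon_1$ through an additional $O(\log\frac1{\varepsilon_1})$ overhead, e.g.\ a preliminary constant-depth correction or a polynomial-degree padding. Adding the costs gives $O(\log\frac1{\varepsilon_1}+\frac1\gamma\log\frac1{\varepsilon_2})$ controlled uses of $U$ and $U^\dagger$, since each use of $U_H$ costs one controlled $U$ or $U^\dagger$.
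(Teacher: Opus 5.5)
The paper gives no proof of this lemma; it is quoted from Chen et al.~\cite{chen2026quantum}. Your argument therefore has to stand on its own, and its central second step does not.

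\textbf{The system register is not left untouched.} A QSVT/QSP circuit built from $U_H$ and reflections about $\Pi_H$ cannot be run ``with $c$ as the signal-processing qubit.'' Only its compression $\Pi_H W_\Phi\Pi_H$ is a function of $\Pi_H U_H\Pi_H$. The QSP qubit is the invariant plane spanned by $\ket{\phi_{i,\nu}}$ and $\ket{\phi^\perp_{i,\nu}}\propto(I-\Pi_H)U_H\ket{\phi_{i,\nu}}$, and this plane lies \emph{inside the system register}. Hence $W_\Phi\ket{\phi_{i,\nu}}=P(x)\ket{\phi_{i,\nu}}+R(x)\ket{\phi^\perp_{i,\nu}}$. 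For $\sigma_i\le\gamma/2$, and throughout the band $\gamma/2<\sigma_i<\gamma$, the system ends far from $\ket{\phi_{i,\nu}}$, and nothing has been written to $c$. The lemma requires restoration to precision $\varepsilon_1$ for \emph{every} $i$, including that band. The paper relies on this when it applies $D_{j-1}$, $D_j$ and then $W_j$ to the same eigenvector, feeds $\ket{+}_a$ into $W_j$, and projects with $\Pi_H$.

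\textbf{The LCU step does not restore $a$.} $\frac12(W_\Phi+W_{-\Phi})$ is not unitary. The $\ket{-}_a$ branch $\frac12(W_\Phi-W_{-\Phi})\ket{\phi_{i,\nu}}$ carries $\mathrm{i}\,\mathrm{Im}P(x)\ket{\phi_{i,\nu}}$ plus the $\ker\Pi_H$ garbage, and nothing makes it small. So the claim that $\ket{++}_{ab}$ is returned is unsupported.

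\textbf{Controlling only $\mathrm{Re}P=S$ is not enough.} It leaves the $\ket0$ amplitude $\mathrm{i}\,\mathrm{Im}P(x)$ uncontrolled when $\sigma_i\le\gamma/2$. You need to control the complex entries themselves, for example by prescribing the real parts of both QSP polynomials.

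\textbf{What is missing.} You need a mechanism that moves the $\operatorname{im}\Pi_H$-versus-$\ker\Pi_H$ information into $c$ and rotates $\ket{\phi^\perp_{i,\nu}}$ back to $\ket{\phi_{i,\nu}}$ with an $x$-independent phase. This is where $\|A\|\le 1/2$ and the \emph{additive} $O(\log\frac1{\varepsilon_1})$ term should come from. Your attribution of $\varepsilon_1$ to numerical precision of phase angles does not explain this term: that cost is classical and would not yield a $\gamma$-independent query term.

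One way to fill the gap:
\begin{enumerate}
\item Run the degree-$O(\frac1\gamma\log\frac1{\varepsilon_2})$ QSVT.
\item Flip $c$ conditioned on the system lying in $\ker\Pi_H$; this uses no queries.
\item Controlled on $c=1$, apply QSVT to the block $\Pi_HU_H(I-\Pi_H)$. This block maps $\ket{\phi^\perp_{i,\nu}}$ to $\sqrt{1-x^2}\,\ket{\phi_{i,\nu}}$, with $\sqrt{1-x^2}\ge\sqrt3/2$. So an odd polynomial close to $1$ on $[\sqrt3/2,1]$, of degree $O(\log\frac1{\varepsilon_1})$, returns the system to $\ket{\phi_{i,\nu}}$ for every $x\in[-\frac12,\frac12]$.
\end{enumerate}
This yields $\ket{\xi_{i,\nu}}=P(x)\ket0+R(x)\ket1$. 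You must still choose the QSP pair so that, additionally, $R(x)\approx\mathrm{i}$ for $|x|\le\gamma/2$.

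By contrast, taking your ``$c$ as signal qubit'' idea literally would mean building a rotation $e^{\mathrm{i}\arccos(x)X_c}$ on $c$ and calling it at every QSP step. The two costs would then multiply rather than add.
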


We next state a quantum singular value transformation result
for bounded polynomials of definite parity.
\begin{lemma}[Corollary~11 in \cite{gilyen2019quantum}]
\label{thm:qsvt}
Let $P\in\mathbb{R}[x]$ be a polynomial of degree $m$. Suppose that
$P$ has parity $m\bmod 2$, meaning that
$P(-x)=(-1)^mP(x)$, and that $|P(x)|\leq1$ for every
$x\in[-1,1]$. Then there exists a unitary
$U_{P}^{(\mathrm{SV})}$ such that
\[
P^{(\mathrm{SV})}\left(\widetilde{\Pi}U\Pi\right)
=
\begin{cases}
\left(\langle +\rvert\otimes\widetilde{\Pi}\right)
U_{P}^{(\mathrm{SV})}
\left(\lvert+\rangle\otimes\Pi\right),
& \text{if $m$ is odd},\\[5pt]
\left(\langle +\rvert\otimes\Pi\right)
U_{P}^{(\mathrm{SV})}
\left(\lvert+\rangle\otimes\Pi\right),
& \text{if $m$ is even}.
\end{cases}
\]
The unitary $U_{P}^{(\mathrm{SV})}$ can be implemented using $O(m)$
elementary gates, controlled applications of the reflections
$I-2\Pi$ and $I-2\widetilde{\Pi}$, and $O(m)$ queries to controlled
versions of $U$ and $U^\dagger$.
\end{lemma}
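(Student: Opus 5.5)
The plan is the standard quantum signal processing (QSP) route of \cite{gilyen2019quantum}: reduce the matrix statement to a family of $2\times 2$ problems, one per singular value, solve the scalar problem by a phase sequence, and then recover a real polynomial with the $\ket{+}$ ancilla via a linear combination of two unitaries. For the reduction, write the singular value decomposition $\widetilde{\Pi}U\Pi=\sum_i\sigma_i\ket{\widetilde{\psi}_i}\bra{\psi_i}$. By Jordan's lemma, applied to the two projectors $\Pi$ and $U^\dagger\widetilde{\Pi}U$, the space splits into invariant subspaces of dimension at most two. On each subspace, $U$ maps $\mathrm{span}\{\ket{\psi_i},\ket{\psi_i^\perp}\}$ to $\mathrm{span}\{\ket{\widetilde{\psi}_i},\ket{\widetilde{\psi}_i^\perp}\}$ and acts as the reflection-type matrix
\[
R(\sigma_i)=\begin{bmatrix}\sigma_i & \sqrt{1-\sigma_i^2}\\ \sqrt{1-\sigma_i^2} & -\sigma_i\end{bmatrix}.
\]
The controlled reflections $e^{\mathrm{i}\phi(2\Pi-I)}$ and $e^{\mathrm{i}\phi(2\widetilde{\Pi}-I)}$ act on each subspace as $e^{\mathrm{i}\phi Z}$.

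Next I form the alternating sequence
\[
U_{\vec\phi}=e^{\mathrm{i}\phi_1(2\widetilde{\Pi}-I)}U\prod_{j=1}^{(m-1)/2}\Big(e^{\mathrm{i}\phi_{2j}(2\Pi-I)}U^\dagger e^{\mathrm{i}\phi_{2j+1}(2\widetilde{\Pi}-I)}U\Big)
\]
for odd $m$, and the analogous product for even $m$. It uses $O(m)$ queries to $U$ and $U^\dagger$. On each invariant block its top-left entry is $P_{\mathbb C}(\sigma_i)$, where the product $\prod_j e^{\mathrm{i}\phi_jZ}R(x)$ equals
\[
\begin{bmatrix}P_{\mathbb C}(x) & \cdot\\ \cdot & \cdot\end{bmatrix}.
\]
Here $P_{\mathbb C}$ has degree at most $m$ and parity $m \bmod 2$. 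Hence the compression of $U_{\vec\phi}$ between $\Pi$ and $\widetilde{\Pi}$ (or $\Pi$ and $\Pi$ for even $m$) equals $P_{\mathbb C}^{(\mathrm{SV})}(\widetilde{\Pi}U\Pi)$.

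For the real part, using $-\vec\phi$ gives the block entry $P_{\mathbb C}^*$. Controlling the sign of all phases on an ancilla prepared and postselected in $\ket{+}$ therefore yields $\frac12(P_{\mathbb C}+P_{\mathbb C}^*)=\mathrm{Re}\,P_{\mathbb C}$. This controlled version costs only controlled reflections and $O(m)$ gates. It remains to show that for every real $P$ with parity $m\bmod 2$ and $|P|\le1$ on $[-1,1]$, there are phases with $\mathrm{Re}\,P_{\mathbb C}=P$.

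I expect this last step, the existence of the phases, to be the main obstacle. I would first characterize the achievable pairs $(P_{\mathbb C},Q_{\mathbb C})$ by induction on $m$: they are exactly the pairs with correct degrees and parities satisfying $|P_{\mathbb C}|^2+(1-x^2)|Q_{\mathbb C}|^2=1$ on $[-1,1]$. The induction peels off one phase by matching leading coefficients. The problem then reduces to completion: given $P$, find a real polynomial $B$ of matching parity with $P^2+B^2\le 1$, and then $Q_{\mathbb C}$ with $(1-x^2)|Q_{\mathbb C}|^2=1-P^2-B^2$. I would handle this with the Fejér–Riesz style argument from \cite{gilyen2019quantum}: factor the nonnegative polynomial $1-P^2$ by pairing its roots (real roots have even multiplicity in the interior, $\pm1$ contributes the $(1-x^2)$ factor, and complex roots come in conjugate pairs) to write it as $B^2+(1-x^2)C^2$ with real polynomials of the correct parities. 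Setting $P_{\mathbb C}=P+\mathrm{i}B$ and $Q_{\mathbb C}=C$ then finishes the argument.
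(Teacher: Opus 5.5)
The paper gives no proof of this lemma; it only quotes Corollary~11 of \cite{gilyen2019quantum}. Your outline follows that source's argument: reduction to $2\times 2$ blocks, the phase sequence $U_{\vec\phi}$, the $\pm\vec\phi$ combination on a $\ket{+}$ ancilla, and a sum-of-squares completion of $1-P^2$. One intermediate claim is false as you state it, however.

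For the reflection product $\prod_{j=1}^m e^{\mathrm{i}\phi_jZ}R(x)$ with exactly $m$ phases, the pairs $(P_{\mathbb{C}},Q_{\mathbb{C}})$ that occur are \emph{not} all pairs satisfying the degree, parity and norm conditions. For $m=0$ the empty product gives only $P_{\mathbb{C}}=1$, although every unimodular constant is admissible. For $m=1$ the top row is $(e^{\mathrm{i}\phi}x,\ e^{\mathrm{i}\phi}\sqrt{1-x^2})$, so the phases of $P_{\mathbb{C}}$ and $Q_{\mathbb{C}}$ are tied. Only a one-parameter subfamily of the admissible pairs $(e^{\mathrm{i}\alpha}x,e^{\mathrm{i}\beta})$ is reached, so your induction fails already at its base case.

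The repair is the one in the source:
\begin{itemize}
\item Prove the pair characterization for $e^{\mathrm{i}\phi_0Z}\prod_{j=1}^m W(x)e^{\mathrm{i}\phi_jZ}$, where $W(x)=e^{\mathrm{i}\arccos(x)X}$, with $m+1$ free phases.
\item Transfer only the top-left entry, using $R(x)=-\mathrm{i}e^{\mathrm{i}\pi Z/4}W(x)e^{\mathrm{i}\pi Z/4}$. With this identity, an $m$-phase reflection product is $(-\mathrm{i})^m$ times a $W$-product whose last phase is $\pi/4$.
\item Shifting the first and last phases in opposite directions does not change the top-left entry.
\item The set of admissible $P_{\mathbb{C}}$ is closed under multiplication by unimodular constants.
\end{itemize}
Hence every admissible $P_{\mathbb{C}}$, in particular $P+\mathrm{i}B$, is realized by some $\vec\phi\in\mathbb{R}^m$, which is all your argument uses.

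The completion step has two omissions. First, your list of root types leaves out real roots of $1-P^2$ outside $[-1,1]$, which can have odd multiplicity (already for $P=x/2$). Second, you need a rule for combining the individual factors into a single $B^2+(1-x^2)C^2$. Write $1-P^2=\tilde A(x^2)$ and use the identity
\[
\bigl(B_1^2+(1-x^2)C_1^2\bigr)\bigl(B_2^2+(1-x^2)C_2^2\bigr)=\bigl(B_1B_2+(1-x^2)C_1C_2\bigr)^2+(1-x^2)\bigl(B_1C_2-B_2C_1\bigr)^2,
\]
which keeps $B,C$ real and adds parities. The factor types are then:
\begin{itemize}
\item a root $s\le 0$ of $\tilde A$ gives $x^2-s=(1-s)x^2+(1-x^2)(-s)$;
\item a root $s\ge 1$ gives $s-x^2=(s-1)x^2+(1-x^2)s$;
\item roots in $(0,1)$ have even multiplicity and contribute squares;
\item a non-real pair $s,\bar s$ gives $x^4-2\,\mathrm{Re}(s)\,x^2+\abs{s}^2=(\abs{s}+bx^2)^2+(1-x^2)(b^2-1)x^2$ with $b=-\abs{s}-\abs{s-1}$, where $\abs{b}\ge 1$.
\end{itemize}
Multiplying by $1=x^2+(1-x^2)$ adjusts the parity to $m \bmod 2$. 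With these repairs the outline goes through.
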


Finally, we recall a polynomial approximation result for
positive power functions.

\begin{lemma}[Polynomial approximation of positive power functions,
Lemma 6 in \cite{wang2024quantum}]
\label{lem:poly-approx-positive-powers}
Let $c>0$, $\beta\in(0,1)$, $\nu\in(0,\beta)$, and
$\eta\in(0,1/2)$. Define
$g(x)=2^{-c-1}\beta^{-c}x^c$. Then there exists an efficiently
computable even polynomial
$S=S_{c,\nu,\beta,\eta}\in\mathbb{R}[x]$ of degree
$O((c+1)\nu^{-1}\log(1/(\beta\nu\eta)))$ such that
\[
\begin{aligned}
|S(x)|&\leq 2g(x),
&&\text{for every }x\in[0,\nu],\\
|g(x)-S(x)|&\leq\eta,
&&\text{for every }x\in[\nu,\beta],\\
|S(x)|&\leq1,
&&\text{for every }x\in[-1,1].
\end{aligned}
\]
\end{lemma}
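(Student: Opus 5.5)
The plan is to factor out an even monomial. That monomial supplies the multiplicative smallness near $0$, and leaves a negative fractional power to approximate, which the QSVT literature already handles. Set $a=2m-c$, where $2m$ is the smallest even integer with $2m\ge c$, so that $a\in[0,2)$ and $2m<c+2$. On $[\nu,\beta]$ we have
\[
g(x)=2^{-c-1}\beta^{-c}x^{2m}\,x^{-a},
\]
so it suffices to build an even polynomial $R$ that approximates $x^{-a}$ on $[\nu,\beta]$ with controlled size on $[0,\nu]$. We then take
\[
S(x)=2^{-c-1}\beta^{-c}x^{2m}R(x)\cdot W(x),
\]
where $W$ is an even polynomial damping the region $|x|>\beta$.

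\emph{Step 1 (negative power).} I would use the standard polynomial approximation of negative powers from Gily\'en et al.~\cite{gilyen2019quantum}, rescaled to the window $[\nu,\beta]$. It gives an even polynomial $\widetilde R$ of degree $O((a+1)\nu^{-1}\log(1/\eta'))$ with the following properties:
\begin{itemize}
\item $\bigl|\widetilde R(x)-\tfrac12(\nu/x)^{a}\bigr|\le\eta'$ on $[\nu,\beta]$;
\item $|\widetilde R(x)|\le 1$ on $[-1,1]$, and in particular on $[0,\nu]$.
\end{itemize}
Put $R=2\nu^{-a}\widetilde R$. If $a=0$ (that is, $c$ is an even integer), simply take $R\equiv1$.

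\emph{Step 2 (cutoff).} Let $W$ be an even rectangle polynomial with $0\le W\le1$ on $[-1,1]$, $W\ge 1-\eta'$ on $[-\beta,\beta]$, and $|W|\le\eta'$ for $|x|\ge \min(1,2\beta)$. Its degree is $O(\beta^{-1}\log(1/\eta'))\le O(\nu^{-1}\log(1/\eta'))$.

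\emph{Step 3 (verification).} Check the three properties one at a time.
\begin{itemize}
\item On $[0,\nu]$, using $x\le\nu$ and $a\ge 0$,
\[
|S(x)|\le 2^{-c-1}\beta^{-c}x^{2m}\cdot 2\nu^{-a}\le 2^{-c}\beta^{-c}x^{2m}x^{-a}=2g(x).
\]
\item On $[\nu,\beta]$,
\[
|S-g|\le 2^{-c-1}\beta^{-c}x^{2m}\cdot 2\nu^{-a}\eta'+g\,\eta'\lesssim (\beta/\nu)^{a}\eta'+\eta'.
\]
The first bound uses $x\le\beta$. Taking $\eta'=\eta(\nu/\beta)^{2}/4$ makes this at most $\eta$, at the cost of only a $\log(1/(\beta\nu\eta))$ factor in the degree.
\item On $[-1,1]$, split into two ranges.
\begin{itemize}
\item For $|x|\le\nu$, the first bullet gives $|S|\le 2g\le 2^{-c}<1$.
\item For $\nu\le|x|\le 2\beta$, we get
\[
|S|\le 2^{-c-1}\beta^{-c}(2\beta)^{2m}\cdot 2\nu^{-a}\bigl(\tfrac12(\nu/x)^a+\eta'\bigr)\lesssim 2^{2m-c}\le 4,
\]
so the constant needs adjusting. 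I would fix this by replacing $W$ with a cutoff that drops at $(1+\tfrac{1}{c+2})\beta$ instead of $2\beta$, and by absorbing the remaining constant into the slack (the lemma only asks for $|S|\le1$, and $g(\beta)=2^{-c-1}$). Beyond that point, $W$ kills everything, up to a term of size $\eta'$ times the global bound $\nu^{-a}\beta^{-c}$. The latter is again handled by the choice of $\eta'$.
\end{itemize}
\end{itemize}

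The total degree is $2m+\deg R+\deg W=O((c+1)\nu^{-1}\log(1/(\beta\nu\eta)))$, and the polynomial is even because all three factors are even.

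I expect the main obstacle to be the bookkeeping in Step~3 on the transition region $|x|\in[\beta,\min(1,2\beta)]$. There the growth of $x^{2m}$ for $x$ above $\beta$ must be offset by a cutoff steep enough to keep the global bound $|S|\le1$. At the same time the cutoff must not disturb the approximation on $[\nu,\beta]$, and all this must hold without degree beyond $\widetilde O((c+1)/\nu)$. If the constants do not close, my fallback is to apply the negative-power approximation with target $\tfrac14(\nu/x)^a$ instead of $\tfrac12(\nu/x)^a$. This buys a factor of $2$, at the price of proving only $|S|\le g$ plus the sliver where the cutoff acts.
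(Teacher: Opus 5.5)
The paper does not prove this lemma; it imports it from \cite{wang2024quantum}. There is therefore no in-paper argument to compare against, and I assess your proposal on its own. Your strategy is sound, and in fact essentially forced: an even polynomial with $|S(x)|\le 2g(x)$ near $0$ must be divisible by $x^{2m}$ with $2m\ge c$. Your checks of the first two properties are correct. In particular, the bound on $[0,\nu]$ via $\nu^{-a}\le x^{-a}$ is exactly the right trick. The global bound $|S|\le 1$, however, is not established as written, for two concrete reasons.

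\textbf{The transition band.} Here you bound $x^{2m}$ and $x^{-a}$ separately, which manufactures the constant $2^{2m-c}$. None of your proposed repairs closes this:
\begin{itemize}
\item The requirement $|S|\le1$ leaves no slack into which a constant above $1$ can be absorbed.
\item With your decoupled estimate, moving the cutoff to $(1+\frac{1}{c+2})\beta$ still gives $2^{-c-1}(1+\frac{1}{c+2})^{2m}$, which tends to $9/8$ as $c\to0$.
\item The $\frac14$ fallback gains nothing once $R$ is renormalized to approximate $x^{-a}$.
\end{itemize}
None of this is needed. Apply the negative-power corollary of Gily\'en et al.\ directly with $\delta=\nu$ rather than rescaling to $[\nu,\beta]$; a rescaled $P(x/\beta)$ is not even bounded for $x>\beta$. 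Applied directly, the corollary gives $|R(x)-x^{-a}|\le 2\nu^{-a}\eta'$ on all of $[\nu,1]$, which your transition estimate already uses tacitly. Now keep $x^{2m}x^{-a}=x^c$ together. For $\nu\le x\le\min\{1,2\beta\}$ this gives
\[
|S(x)|\le 2^{-c-1}\beta^{-c}x^{c}+2^{-c}\beta^{-c}x^{2m}\nu^{-a}\eta'\le\tfrac12+2^{a}(\beta/\nu)^{a}\eta'\le\tfrac12+\eta ,
\]
so your original cutoff at $2\beta$ already works.

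\textbf{The tail $|x|\ge 2\beta$.} Here the only available bound is of the form $2^{-c}\beta^{-c}\nu^{-a}\eta'$. You note this term, but the $\eta'=\eta(\nu/\beta)^2/4$ you fixed does not make it at most $1$. For example, take $c=4$ (so $a=0$ and $R\equiv1$), $\beta=10^{-2}$, $\nu=10^{-3}$ and $\eta=10^{-1}$; then the bound $2^{-5}\beta^{-4}\eta'$ is about $780$.

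You additionally need $\eta'\le(2\beta)^c\nu^2$, which gives $\log(1/\eta')=O(c\log(1/\beta)+\log(1/(\nu\eta)))$. Since $a<2$, you have $\deg R=O(\nu^{-1}\log(1/\eta'))$ and $\deg W=O(\beta^{-1}\log(1/\eta'))$. The total degree therefore remains $O((c+1)\nu^{-1}\log(1/(\beta\nu\eta)))$. In your construction, it is this tail suppression, not the negative-power step, that accounts for the factor $c+1$.

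With $\eta'=\min\{\eta(\nu/\beta)^2/4,\,(2\beta)^c\nu^2\}$ and the coupled estimate above, your construction proves the lemma.
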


\section{An optimal estimator for $0<q\leq\frac{1}{2}$}\label{sec:uniform}
We adapt the multilevel framework of Chen et al.~(Section 3 of~\cite{chen2026quantum}) by changing the input state and the local QSVT polynomials. Uniform initialization assigns weight $1/n$ to each label, so the required transformation approximates a normalized positive power $x^q$. The discriminators and amplitude-estimation routine remain unchanged. We present the quantum algorithm in Algorithm~\ref{algo:uniform-multilevel} and prove its correctness and query complexity in the remainder of this section.

\begin{theorem}[Tsallis entropy estimation for $0<q\leq1/2$]
\label{thm:tsallis-low-complete}
Let $P=(p_1,\ldots,p_n)$ be a probability distribution over $[n]$ with the purified quantum access $U_P$. For any
$0<q\leq1/2$ and $0<\varepsilon\leq1$, there exists a quantum algorithm
that outputs an estimate $\widehat{H}$ satisfying $\left|
\widehat{H}
-
\mathrm{H}_q^{\text{Tsa}}(P)
\right|
\leq\varepsilon$ with probability at least $2/3$. The algorithm uses
$$
\widetilde{O}\left(
\left(\frac{n}{\varepsilon}\right)^{\frac{1}{2q}}
\right)
$$
queries to $U_P$ and $U_P^\dagger$.
\end{theorem}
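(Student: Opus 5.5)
Since $|\widehat H-\mathrm{H}_q^{\mathrm{Tsa}}(P)|=|\widehat F-F_q(P)|/(1-q)$ and $1-q\geq 1/2$, it suffices to estimate $F_q(P)=\sum_i p_i^q$ to additive error $\varepsilon/2$. I would write $F_q(P)=n\cdot\frac1n\sum_i p_i^q$ and estimate the uniform average $\frac1n\sum_i p_i^q$ to additive error $\Theta(\varepsilon/n)$.

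\textbf{Block encoding.} From $U_P$, prepare the uniform superposition $\ket{u}=\frac{1}{\sqrt n}\sum_i\ket{\ell_i}$ over labels. Use the projected unitary whose singular values are $\sqrt{p_i}/2$ on the label subspaces, with singular vectors as in Definition~\ref{def: Hermitian dilation}. Starting from $\ket{u}$, each label carries weight $1/n$ and sits in the span of the corresponding dilation eigenvectors $\ket{\chi_{i,\pm}}$.

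\textbf{Multilevel scales and cutoff.} Fix scales $t_j=2^{-j}$ for $j=0,\dots,m$, with $t_m=\Theta((\varepsilon/n)^{1/q})$, so $m=O(\log(n/\varepsilon))$. Labels with $p_i<t_m$ contribute at most $n t_m^q=O(\varepsilon)$ to $F_q$ and are discarded. Following Chen et al., apply the discriminators of Lemma~\ref{lem: Singular value discrimination} with thresholds $\gamma_j\asymp\sqrt{t_j}$. This coherently tags each label with an interval index $j$, where $p_i\in[t_j/4,t_{j-1}]$ up to the usual overlap ambiguity. That ambiguity is handled exactly as in their framework: a label caught in an overlap is assigned consistently to one of two adjacent levels, and both levels' polynomials approximate $p^q$ on the overlap. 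The cost of level $j$ is $O(t_j^{-1/2}\log(1/\varepsilon_2))$ queries.

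\textbf{Local polynomials.} On level $j$, apply QSVT (Lemma~\ref{thm:qsvt}) with the even polynomial $Q_j=S_{c,\nu,\beta,\eta}$ from Lemma~\ref{lem:poly-approx-positive-powers}. Take $c=q$ acting on $x^2=p/4$; equivalently, take $c=2q$ in $x=\sqrt p/2$. Set $\beta\asymp\sqrt{t_{j-1}}$, $\nu\asymp\sqrt{t_j}/4$, and rescaling $\Lambda_j=\Theta(t_j^{q})$, so that $\Lambda_j Q_j(\sqrt{p_i}/2)^2\approx p_i^q$ on the interval.

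Because $g$ is normalized by $\beta^{-c}$, $Q_j$ is $\Theta(1)$ on its interval. The error $\eta$ in $Q_j$ produces an error of $O(\Lambda_j\eta)$ in the estimate of $p_i^q$. The degree is $O(t_j^{-1/2}\log(1/\eta))$, with only logarithmic dependence on the accuracy, since $x^q$ is bounded and no $x^{q-1}$ singularity arises.

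\textbf{Amplitude estimation and error budget.} The flagged amplitude on level $j$ is
\[
a_j=\frac1n\sum_{i\in L_j}Q_j(\sqrt{p_i}/2)^2,
\]
where $L_j$ is the set of labels tagged $j$. We need $\Lambda_j|\hat a_j-a_j|\leq\varepsilon/(nm)$, i.e. accuracy $\epsilon_j=\varepsilon t_j^{-q}/(nm)$. By Lemma~\ref{lem: amplitude estimation}, this costs $O(\sqrt{a_j}/\epsilon_j+1/\sqrt{\epsilon_j})$ repetitions, each of cost $\widetilde O(t_j^{-1/2})$. Using the crude bound $a_j\leq1$, the level-$j$ cost is
\[
\widetilde O\!\left(\frac{n}{\varepsilon}\,t_j^{q-1/2}\right).
\]
For $q\leq1/2$ the sum over $j$ is dominated by $j=m$, or for $q=1/2$ is a factor $m$ times one term. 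With $t_m=(\varepsilon/n)^{1/q}$ this gives
\[
\frac{n}{\varepsilon}\Big(\frac{\varepsilon}{n}\Big)^{\frac{q-1/2}{q}}=\Big(\frac n\varepsilon\Big)^{\frac1{2q}}.
\]
The $1/\sqrt{\epsilon_j}$ term is smaller. Setting each level's failure probability to $1/(3m)$ and taking a union bound gives overall success probability at least $2/3$ with polylogarithmic overhead.

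\textbf{Main obstacle: error accounting.} The delicate step is controlling the propagated errors to total $O(\varepsilon)$. Three sources contribute: the polynomial error $\eta$, the discriminator errors $\varepsilon_1,\varepsilon_2$, and the boundary misclassification between adjacent levels. Each is multiplied by $n\Lambda_j$ when converting back to $F_q$.

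I would take $\eta,\varepsilon_1,\varepsilon_2=\mathrm{poly}(\varepsilon/n)$. This is affordable because all depend only logarithmically on these parameters. Labels below $t_m$ must also be handled: for them the bound $|S|\leq 2g$ guarantees a contribution of only $O(n t_m^q)=O(\varepsilon)$. These two choices keep the total additive error at $O(\varepsilon)$.
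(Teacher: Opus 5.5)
Your outline follows the paper's route: query-free uniform initialization, discriminator-localized levels, local even QSVT polynomials from Lemma~\ref{lem:poly-approx-positive-powers} rescaled by $\Lambda_j=\Theta(t_j^q)$, amplitude estimation to accuracy $\Theta(\varepsilon/(nm\Lambda_j))$ with the crude bound $a_j\leq1$, and a sum $\frac{n}{\varepsilon}\sum_j t_j^{q-1/2}$ dominated by the finest scale. However, your choice of scales $t_j=2^{-j}$ breaks the step you delegate to ``their framework''.

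Lemma~\ref{lem: Singular value discrimination} only separates $\sigma\geq\gamma$ from $\sigma\leq\gamma/2$. In terms of $p=4\sigma^2$, the uncertain zone of $D_j$ is therefore $(\gamma_j^2,4\gamma_j^2)$, a factor of $4$ wide whatever constant you put in $\gamma_j\asymp\sqrt{t_j}$. With ratio-$2$ scales, the uncertain zones of $D_j$ and $D_{j+1}$ overlap, so a label can be ambiguous for two consecutive discriminators. The adjacent-pair weights $b_j=(1-f_{j-1})f_j$ use only the flags of $D_{j-1}$ and $D_j$. For such a label one has $f_{j-1}\approx0$ and $f_{j+2}\approx1$, and
\[
b_j+b_{j+1}+b_{j+2}\approx f_j+(1-f_j)f_{j+1}+(1-f_{j+1})=1+f_j(1-f_{j+1}),
\]
which can be close to $2$. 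So the completeness property~\eqref{eq:uniform-completeness} fails and $p_i^q$ can be overcounted by a constant factor. Your claim that an ambiguous label is split between \emph{two} adjacent levels is false for these parameters.

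The paper takes $t_j=4^{-j}$ and $\gamma_j=\sqrt{t_j}/2$. Then the uncertain zones $(t_j/4,t_j)$ are pairwise disjoint, at most one discriminator is ambiguous at any $p$, and $b_{k-1}+b_k\in[1-\xi^2,1]$. The price is that $Q_j$ must be accurate on the wider interval $I_j=(t_j/4,4t_j]$, i.e.\ $\nu=\sqrt{t_j}/4$ and $\beta=\sqrt{t_j}$. Keeping ratio-$2$ scales would instead require the full chain $\prod_{l<j}(1-f_l)\,f_j$, which telescopes for any overlap but uses all earlier flags.

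Two smaller points. First, amplitude estimation returns $Q_j(\sigma_i)^2$, so to get $\Lambda_jQ_j(\sqrt{p_i}/2)^2\approx p_i^q$ you need $c=q$ in Lemma~\ref{lem:poly-approx-positive-powers}. This is the paper's choice: $g_j(x)=2^{-q-1}t_j^{-q/2}x^q$ and $\Lambda_j=2^{4q+2}t_j^q$. Your ``equivalently, $c=2q$ in $x=\sqrt p/2$'' would yield $p^{2q}$. Second, the uniform state is not prepared ``from $U_P$''; it costs no queries. What the argument needs is a block encoding whose left singular vectors are computational basis states. The paper obtains this from an extra label register, with $U=I_L\otimes U_P^\dagger\otimes R^\dagger$ and $\Pi=\sum_i\ket{i}\bra{i}_L\otimes I_G\otimes\ket{i}\bra{i}_I\otimes\ket{0}\bra{0}_r$, and you should state this construction explicitly. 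With these corrections, your error budget (all precisions $\mathrm{poly}(\varepsilon/n)$, cutoff contribution $O(nt_m^q)$) and your cost computation coincide with the paper's.
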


\subsection{Uniform initialization and local transformations}
    
Set $\eta=(1-q)\varepsilon$. It suffices to estimate $F_q(P)=\sum_i p_i^q$ to additive error $\eta$. We first encode $\sigma_i=\sqrt{p_i}/2$ as singular values with explicitly preparable left singular vectors. Let $R|0\rangle=\frac12|0\rangle+\frac{\sqrt3}{2}|1\rangle$, and define
\begin{align}\label{eq:encoding}
\begin{split}
    U&=I_L\otimes U_P^\dagger\otimes R^\dagger,\\
\widetilde\Pi&=I_L\otimes|0\rangle\langle0|_G
 \otimes|0\rangle\langle0|_I\otimes|0\rangle\langle0|_r,\\
\Pi&=\sum_{i=1}^n|i\rangle\langle i|_L\otimes I_G
 \otimes|i\rangle\langle i|_I\otimes|0\rangle\langle0|_r.
\end{split}
\end{align}
Then
\[
\widetilde\Pi U\Pi=A
 =\sum_{i=1}^n\sigma_i|\ell_i\rangle\langle v_i|,
\qquad
|\ell_i\rangle=|i,0,0,0\rangle,
\quad |v_i\rangle=|i,\phi_i,i,0\rangle.
\]
Use the Hermitian dilation $(U_H,\Pi_H)$ from Definition~\ref{def: Hermitian dilation}, with the system register $s=(h,L,G,I,r)$. Its eigenvectors can be chosen as
\[
|\chi_{i,\nu}\rangle
=\frac{|0\rangle_h|\ell_i\rangle+\nu|1\rangle_h|v_i\rangle}{\sqrt2},
\qquad \nu\in\{-1,1\},
\]
with eigenvalues $\nu\sigma_i$. Let
\begin{align}\label{eq:input state}
    \begin{split}
        U_{\text{in}}\ket{0}=|\Psi_{\mathrm{in}}\rangle
&=|++\rangle_{ab}|00\rangle_{cd}\otimes
 \left(|0\rangle_h\otimes\frac1{\sqrt n}\sum_{i=1}^n|\ell_i\rangle\right)_s\\
&=|++\rangle_{ab}|00\rangle_{cd}\otimes
 \frac1{\sqrt{2n}}\sum_{i,\nu}|\chi_{i,\nu}\rangle_s.
    \end{split}
\end{align}
Observe that \(U_{\text{in}}\) makes no queries to \(U_P\). The discriminators use work qubits $a,b$ and flag qubits $c,d$; QSVT acts on $a$ and the system register $s$.
For the probability scales and error parameters, set
\[
t_j=4^{-j},\qquad C_q=2^{4q+2},\qquad
m=\min\{k\geq1:n(1+C_q)t_k^q\leq\eta/4\},
\qquad \xi=\frac{\eta}{1024mn(1+C_q)}.
\]
At level $j$, define
\[
I_j=(t_j/4,4t_j],\qquad
\Lambda_j=C_qt_j^q,\qquad
\alpha_j=\min\{1/4,\xi/(4\Lambda_j)\}.
\]
By Lemma~\ref{lem:poly-approx-positive-powers}, the even polynomial
\[
Q_j=S_{q,\,\sqrt{t_j}/4,\,\sqrt{t_j},\,\alpha_j}
\]
is bounded by one on $[-1,1]$ and approximates
\[
g_j(x)=2^{-q-1}t_j^{-q/2}x^q
\]
to error $\alpha_j$ on $[\sqrt{t_j}/4,\sqrt{t_j}]$. The normalization is chosen so that
\[
\Lambda_j g_j(\sqrt p/2)^2=p^q.
\]
Consequently, for $p\in I_j\cap[0,1]$,
\begin{equation}\label{eq:uniform-local-poly}
\left|\Lambda_jQ_j(\sqrt p/2)^2-p^q\right|
\leq2\Lambda_j\alpha_j\leq\xi.
\end{equation}
Thus each level approximates $p^q$ on its local interval; the discriminators select the appropriate levels.
\begin{algorithm}[htbp]
\caption{Uniform-input multilevel estimation of Tsallis entropy}
\label{algo:uniform-multilevel}
\begin{algorithmic}[1]
\REQUIRE Purified quantum access $U_P$ to $P=(p_i)_{i=1}^n$,
         $0<q\leq1/2$, and $0<\varepsilon\leq1$.
\ENSURE An estimate $\widehat H$ satisfying
        $|\widehat H-\mathrm{H}_q^{\mathrm{Tsa}}(P)|\leq\varepsilon$
        with probability at least $2/3$.

\STATE Set $\eta\gets(1-q)\varepsilon$, $C_q\gets2^{4q+2}$,
       and $t_j\gets4^{-j}$ for $j\geq0$.
\STATE Choose the smallest $m\geq1$ such that
       $n(1+C_q)t_m^q\leq\eta/4$.
\STATE Set $\xi\gets\eta/(1024mn(1+C_q))$ and $S\gets0$.
\STATE Construct $U_H$, $\Pi_H$, and $U_{\mathrm{in}}$ as described in  Eqs.~\eqref{eq:encoding} and~\eqref{eq:input state},
       where $U_{\mathrm{in}}\ket{0}=\ket{\Psi_{\mathrm{in}}}$.
\STATE Set $D_0\gets I$. For each $k\in[m]$, specify $D_k$
       for $(\Pi_H,\Pi_H,U_H)$ with threshold
       $\gamma_k=\sqrt{t_k}/2$, implementation precision $\xi/\sqrt{2n}$,
       and flag precision $\xi$.
\FOR{$j=1,\ldots,m$}
    \STATE Set $\Lambda_j\gets C_qt_j^q$,
           $\delta_j\gets\eta/(2mn\Lambda_j)$,
           and $\widehat a_j\gets0$.
    \IF{$\delta_j<1$}
        \STATE Set $\alpha_j\gets\min\{1/4,\xi/(4\Lambda_j)\}$
               and $Q_j\gets
               S_{q,\,\sqrt{t_j}/4,\,\sqrt{t_j},\,\alpha_j}$.
        \STATE Construct the QSVT unitary $W_j$ for $Q_j$
               on $(a,\mathsf{s})$, with operator-norm error
               at most $\xi$.
        \STATE Let $V_j=(W_j)_{as}
              (D_j)_{abds}\rbra{D_{j-1}}_{abcs}$, where $D_{j-1}$ is instantiated with flag qubit $c$ and $D_j$ with flag qubit $d$.
        \STATE Set the acceptance projector to
        \[
          M_j=
          \begin{cases}
          \ket{+}\bra{+}_a\otimes I_{bc}
          \otimes\ket{0}\bra{0}_d\otimes\Pi_H,
          &j=1,\\[3pt]
          \ket{+}\bra{+}_a\otimes I_b
          \otimes\ket{10}\bra{10}_{cd}\otimes\Pi_H,
          &j\geq2.
          \end{cases}
        \]
        \STATE Apply amplitude estimation
               to $V_j\ket{\Psi_{\mathrm{in}}}$ and $M_j$, obtaining
               an estimate $\widehat a_j$ of $a_j=\bigl\|M_jV_j\ket{\Psi_{\mathrm{in}}}\bigr\|^2$  with additive error $\delta_j$ and failure probability at most $1/(3m)$.
    \ENDIF
    \STATE $S\gets S+n\Lambda_j\widehat a_j$.
\ENDFOR

\RETURN $\widehat H=(S-1)/(1-q)$.
\end{algorithmic}
\end{algorithm}

\subsection{Correctness and query complexity}

\paragraph{Discriminator weights.} Ignoring implementation error, $D_j$ preserves $|\chi_{i,\nu}\rangle$ and outputs a normalized flag state $|z_{i,j,\nu}\rangle$. Let
\[
f_{i,j,\nu}=|\langle0|z_{i,j,\nu}\rangle|^2.
\]
The discriminator guarantee gives
\[
f_{i,j,\nu}\geq1-\xi^2\quad\text{if }p_i\geq t_j,
\qquad
f_{i,j,\nu}\leq\xi^2\quad\text{if }p_i\leq t_j/4.
\]
Set $f_{i,0,\nu}=0$ and define
\[
b_{i,j,\nu}=(1-f_{i,j-1,\nu})f_{i,j,\nu},
\qquad w_{i,j}=\frac12\sum_{\nu\in\{-1,1\}}b_{i,j,\nu}.
\]
Since $Q_j$ is even, the ideal acceptance probability is
\begin{equation}\label{eq:uniform-ideal-acceptance}
a_j^\star=\frac1n\sum_{i=1}^n w_{i,j}Q_j(\frac{\sqrt{p_i}}{2})^2.
\end{equation}
The weights satisfy two properties. First, they are localized:
\begin{equation}\label{eq:uniform-localization}
p_i\notin I_j\quad\Longrightarrow\quad w_{i,j}\leq\xi^2.
\end{equation}
Indeed, below $I_j$, $f_{i,j,\nu}$ is at most $\xi^2$; above $I_j$, $(1-f_{i,j-1,\nu})$ is at most $\xi^2$. For $j=1$, $p_i\notin I_1$ means that $p_i\leq \frac{1}{4}$, and thus $f_{i,j,\nu}$ is at most $\xi^2$.

Second, the relevant weights approximately sum to one. For $p_i>t_m$, choose the unique $k\in[m]$ with $t_k<p_i\leq t_{k-1}$, and let $K_i=\{k-1,k\}\cap[m]$. Then
\begin{equation}\label{eq:uniform-completeness}
1-\xi^2\leq\sum_{j\in K_i}w_{i,j}\leq1.
\end{equation}
For $k=1$, this is the guarantee for $f_{i,1,\nu}$. For $k\geq2$, suppress the indices $i,\nu$ and observe that
\[
b_{k-1}+b_k=(1-f_{k-2})f_{k-1}+(1-f_{k-1})f_k.
\]
Here $f_{k-2}\leq\xi^2$ (including $f_0=0$) and $f_k\geq1-\xi^2$, so the expression lies in $[1-\xi^2,1]$. Averaging over $\nu$ proves Eq.~\eqref{eq:uniform-completeness}. All other levels satisfy Eq.~\eqref{eq:uniform-localization}.

\paragraph{Approximation and implementation errors.} Define
\[
B_i=\sum_{j=1}^m\Lambda_jw_{i,j}Q_j(\sigma_i)^2,
\qquad n\sum_j\Lambda_ja_j^\star=\sum_iB_i.
\]
For $p_i>t_m$, Eqs.~\eqref{eq:uniform-local-poly}, \eqref{eq:uniform-localization}, and \eqref{eq:uniform-completeness}, together with $\Lambda_j\leq C_q$, imply
\[
|B_i-p_i^q|\leq\xi+\xi^2+mC_q\xi^2.
\]
For $p_i\leq t_m$, every level $j<m$ has weight at most $\xi^2$, while the last contributes at most $\Lambda_m$. Hence
\[
|B_i-p_i^q|\leq mC_q\xi^2+(1+C_q)t_m^q.
\]
Summing over labels yields
\[
\left|n\sum_j\Lambda_ja_j^\star-F_q(P)\right|
\leq n\xi+n(1+mC_q)\xi^2+n(1+C_q)t_m^q.
\]
Each discriminator contributes at most $\xi$ to the state error on the uniform superposition: its error on each of the $2n$ branches is at most $\xi/\sqrt{2n}$. Including QSVT, the total state error at a level is at most $3\xi$, so $|a_j-a_j^\star|\leq6\xi$. Therefore,
\[
\begin{aligned}
\left|n\sum_j\Lambda_ja_j-F_q(P)\right|
&\leq8nm(1+C_q)\xi+n(1+C_q)t_m^q\\
&\leq\frac{\eta}{128}+\frac{\eta}{4}<\frac{\eta}{2}.
\end{aligned}
\]

\paragraph{Estimation error.} By a union bound, all non-skipped estimates satisfy $|\widehat a_j-a_j|\leq\delta_j$ simultaneously with probability at least $2/3$. The same inequality holds deterministically at skipped levels because $a_j\in[0,1]$ and $\delta_j\geq1$. On this event,
\[
\left|S-n\sum_j\Lambda_ja_j\right|
\leq n\sum_j\Lambda_j\delta_j=\frac{\eta}{2}.
\]
Combining the two error bounds gives $|S-F_q(P)|\leq\eta$, and hence $|\widehat H-\mathrm H_q^{\mathrm{Tsa}}(P)|\leq\varepsilon$.

\paragraph{Query complexity.} The choice of the finest level gives
\[
t_m=\Theta\!\left(\left(\frac{\eta}{n}\right)^{\frac{1}{q}}\right),
\qquad m=O\!\left(\log\frac{n}{\eta}\right).
\]
At level $j$, the polynomial degree is
\[
\deg Q_j=O\!\left(\frac{q+1}{\sqrt{t_j}}
\log\frac{4}{t_j\alpha_j}\right)
=\widetilde O\!\left(t_j^{-\frac{1}{2}}\right).
\]
The discriminator at threshold $\gamma_j$ also costs
$\widetilde O(t_j^{-\frac{1}{2}})$ queries. Thus one application
of $V_j$ or $V_j^\dagger$ costs
$\widetilde O(t_j^{-\frac{1}{2}})$ queries.
Amplitude estimation to error $\delta_j<1$ uses
$O(\frac{\sqrt{a_j}}{\delta_j}+\frac{1}{\sqrt{\delta_j}})
=O(\frac{1}{\delta_j})$ applications, with an additional
$O(\log(3m))$ factor to reduce the failure probability to
$\frac{1}{3m}$. Consequently, the total query cost is
\[
\begin{aligned}
\widetilde O\!\left(\sum_{j:\delta_j<1}
\frac{t_j^{-\frac{1}{2}}}{\delta_j}\right)
&\leq\widetilde O\!\left(\frac{n}{\eta}
\sum_{j=1}^m t_j^{q-\frac{1}{2}}\right).
\end{aligned}
\]
For $q<\frac{1}{2}$, the geometric sum is
$O(t_m^{q-\frac{1}{2}})$, giving
\[
\widetilde O\!\left(\frac{n}{\eta}t_m^{q-\frac{1}{2}}\right)
=\widetilde O\!\left(
\left(\frac{n}{\eta}\right)^{\frac{1}{2q}}\right).
\]
For $q=\frac{1}{2}$, the sum equals $m$, so the cost is
$\widetilde O(\frac{n}{\eta})$. Since $\eta=(1-q)\varepsilon$,
both cases give the claimed bound.

\begin{remark}\label{rem:dualhigh}
For $1/2<q<1$, the same cost formula gives
\begin{equation}\label{eq:dualhigh}
 \widetilde O\!\left(\frac{n^{\frac32-q}}{\varepsilon}\right).
\end{equation}
\end{remark}

\begin{proof}
For $\frac12<q<1$, let $N_j=\min\{n,\frac{4}{t_j}\}$.
Localization gives $a_j\leq\frac{N_j}{n}+O(\xi)$, where
$\xi=O(\delta_j)$. Thus amplitude estimation yields a total cost
\[
\widetilde O\!\left(
\frac{\sqrt n}{\eta}\sum_{j=1}^m
\sqrt{N_j}\,t_j^{q-\frac12}
+\sqrt{\frac n\eta}\sum_{j=1}^m t_j^{\frac{q-1}{2}}
\right).
\]
Splitting at $t_j=\frac1n$ gives
\[
\sum_j\sqrt{N_j}\,t_j^{q-\frac12}
\leq
\sqrt n\sum_{t_j<\frac1n}t_j^{q-\frac12}
+2\sum_{t_j\geq\frac1n}t_j^{q-1}
=O(n^{1-q}).
\]
Also, $t_m=\Theta((\frac{\eta}{n})^{\frac1q})$ implies
\[
\sqrt{\frac n\eta}\sum_j t_j^{\frac{q-1}{2}}
=O\!\left(\left(\frac n\eta\right)^{\frac1{2q}}\right).
\]
Since $\left(\frac n\eta\right)^{\frac1{2q}}
\leq\frac{n^{\frac32-q}}{\eta}$, the cost is
\[
\widetilde O\!\left(\frac{n^{\frac32-q}}{\eta}\right)
=\widetilde O\!\left(\frac{n^{\frac32-q}}{\varepsilon}\right).
\]
\end{proof}

\section{Two-stage simultaneous probability estimation}\label{sec:pilot}

We now develop the two-stage simultaneous probability estimation outlined in
Theorem~\ref{the: Two-stage multidimensional amplitude estimation}.
The first stage uses coarse estimates to group probabilities by
magnitude, and the second refines the estimates within each group
at a precision determined by its scale. This separation allows us
to adapt the estimation accuracy to the individual probabilities
while controlling the total query cost.

\subsection{Main algorithm}

The following theorem collects the guarantees needed for the entropy
estimator in Section~\ref{sec:correction}. In addition to coordinatewise
accuracy, we require a positive lower bound on the estimates, control
of their total mass, and a bound on their weighted squared error.
These properties allow the estimates to serve as reference values
for approximating the power sum.

\begin{theorem}[Two-stage simultaneous probability estimation]\label{the: Two-stage multidimensional amplitude estimation}
Let $P$ be a distribution on $[n]$ with the purified quantum access $U_P$. Let $T\geq\sqrt n$. Then there exists a quantum algorithm that outputs a classical vector $r=(r_1, r_2 \cdots r_n)$ using $\tilde{O}(\sqrt n\,T)$ queries to $U_P$ such that the following properties hold:
\begin{enumerate}
    \item For every $i\in[n]$, $r_i\geq\frac{2}{T^2}$ and $p_i\leq2r_i$.
    \item For every $i\in[n]$, $|r_i-p_i|\leq \frac{\sqrt{p_i}}{8T}$ if $p_i> \frac{9}{4T^2}$, and $|r_i-p_i|\leq \frac{2}{T^2}$ if $p_i\leq\frac{9}{4T^2}$.
    \item $\sum_{i\in [n]}\frac{(p_i-r_i)^2}{r_i}\leq\frac{2n}{T^2}$.
    \item $\sum_{i\in [n]} r_i\leq4$. 
\end{enumerate}
\end{theorem}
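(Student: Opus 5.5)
The plan is to build $r$ in two stages from the subset amplitude estimation routine $\textbf{SubsetEst}$ of Lemma~\ref{lem: Multidimensional Amplitude Estimation}, and then clip the estimates from below. Stage~1 assigns each index to a dyadic scale. Stage~2 re-estimates each scale class with $\textbf{SubsetEst}$ at precision about $\sqrt{\tau}/T$, where $\tau$ is the class's scale. Finally we set $r_i=\max\{\widehat p_i,2/T^2\}$. Each call gets failure probability $\delta=\Theta(1/\log T)$, and there are $O(\log T)$ calls, so a union bound gives overall success probability $2/3$ with only polylogarithmic overhead.

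\emph{Stage 1 (coarse classification).} Fix scales $\tau_k=4^{-k}$ for $k=0,1,\dots,K$, with $K$ minimal such that $\tau_K\le 1/T^2$. Put $L_0=[n]$. At step $k$, run $\textbf{SubsetEst}(U_P,L_k,M_k,\tau_{k+1}/4,\delta)$, where $M_k=\min\{1,2n\tau_k\}$ is justified by the invariant that every $i\in L_k$ has $p_i\le2\tau_k$. Let $L_{k+1}$ be the indices whose estimate is at most $\tau_{k+1}$. Their true probabilities are then at most $\tfrac54\tau_{k+1}\le2\tau_{k+1}$, so the invariant is maintained. The remaining indices form the class $G_k=L_k\setminus L_{k+1}$, and each $i\in G_k$ has $\tfrac34\tau_{k+1}\le p_i\le2\tau_k$. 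The cost of step $k$ is
\[
\widetilde O\bigl(\sqrt{n\tau_k}/\tau_k\bigr)=\widetilde O\bigl(\sqrt{n/\tau_k}\bigr).
\]
This sum is geometric and dominated by the last step, so the total is $\widetilde O(\sqrt n\,T)$. The final leftover set $L_K$ contains only indices with $p_i\le2\tau_K\le2/T^2$.

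\emph{Stage 2 (refinement).} For each class $G_k$, run $\textbf{SubsetEst}$ on $S=G_k$ with precision $\epsilon_k=\sqrt{3\tau_{k+1}/4}/(8T)\le\sqrt{p_i}/(8T)$. For the mass bound, note that every $i\in G_k$ has $p_i\ge\tfrac34\tau_{k+1}$, so $|G_k|=O(1/\tau_k)$. This gives $M=\min\{1,2\tau_k|G_k|\}$, and in every case $M\le O(n\tau_k)$. The cost is therefore
\[
\widetilde O\bigl(\sqrt{M}/\epsilon_k\bigr)=\widetilde O\bigl(T\sqrt{M/\tau_k}\bigr)=\widetilde O\bigl(T\sqrt n\bigr)
\]
per class. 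With $O(\log T)$ classes, the total is again $\widetilde O(\sqrt n\,T)$. For $i\in L_K$ we set $\widehat p_i=0$, which already has error at most $2/T^2$.

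\emph{Verifying the four properties.} Property~1 (first half) holds by the clipping. For the bound $p_i\le2r_i$:
\begin{itemize}
\item if $p_i\le4/T^2$, it is immediate from $r_i\ge2/T^2$;
\item otherwise $\widehat p_i\ge p_i-\sqrt{p_i}/(8T)\ge p_i/2$.
\end{itemize}
For Property~2, clipping toward $2/T^2$ never increases the error when $p_i\ge2/T^2$. Hence the bound $\sqrt{p_i}/(8T)$ survives for large $p_i$. For $p_i\le9/(4T^2)$, both the stage-2 errors and the clipping keep $|r_i-p_i|\le2/T^2$. A point needing a short check is that an index with $p_i>9/(4T^2)$ cannot land in $L_K$; this holds because $p_i\le2\tau_K$ there, and we can choose $K$ so that $2\tau_K\le 9/(4T^2)$. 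For Property~3, split the sum:
\begin{itemize}
\item large-$p_i$ terms are at most $\frac{p_i/(64T^2)}{p_i/2}\le1/T^2$;
\item small-$p_i$ terms are at most $(2/T^2)^2/(2/T^2)=2/T^2$.
\end{itemize}
Tuning constants (for instance a slightly smaller stage-2 precision) gives the total $2n/T^2$. For Property~4,
\[
\sum_i r_i\le1+\sum_i|r_i-p_i|\le1+\frac{2n}{T^2}+\frac{\sqrt n}{8T}\le4,
\]
using Cauchy--Schwarz $\sum_i\sqrt{p_i}\le\sqrt n$ and $T\ge\sqrt n$.

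The main obstacle is the accounting in Stage~1. A single global coarse pass at precision about $1/T^2$ would cost $T^2$, which exceeds the budget when $T\gg\sqrt n$. Conversely, a single low class at threshold $1/(\sqrt n T)$ would force refinement cost about $T^{3/2}n^{1/4}$. The recursive halving of the candidate set is what keeps both the classification masses $M_k$ and the refinement masses at $O(n\tau_k)$. The constants in the thresholds must be chosen carefully so that each class's lower bound on $p_i$ justifies the precision $\epsilon_k\le\sqrt{p_i}/(8T)$ while the invariant $p_i\le2\tau_k$ still holds.
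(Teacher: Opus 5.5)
Your proposal is correct and follows essentially the same route as the paper: peel off labels over geometric scales while maintaining the invariant $p_i\le 2\tau_k$ with mass bounds $\min\{1,2n\tau_k\}$, refine each group at precision $\Theta(\sqrt{\tau_k}/T)$ for $\widetilde O(T\sqrt{|G_k|})$ queries, floor the estimates at $2/T^2$, and use Cauchy--Schwarz for the total mass. The differences are cosmetic: you use base-$4$ instead of base-$2$ scales and different thresholds, and you bound each refinement call by $T\sqrt n$ rather than applying Cauchy--Schwarz across groups, which costs only an extra logarithm. Also, no constant tuning is needed for Property~3, since your bounds already give every term at most $2/T^2$.
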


Algorithm~\ref{alg:two-stage-mae} implements the two stages using
the subset-estimation subroutine from
Lemma~\ref{lem: Multidimensional Amplitude Estimation}.
In Stage I, $R_j$ contains the labels that remain unassigned at
level $j$, and $S_j$ contains those selected at scale $\epsilon_j$.
As the scale decays geometrically, progressively tighter upper bounds on the residual probability mass can be obtained, which in turn lowers the cost of subsequent calls. In Stage II, each selected group is estimated to accuracy $e_j$, proportional to
$\sqrt{\epsilon_j}/T$. Finally, a floor of $2\tau$ ensures that
every output coordinate is positive, including those corresponding
to labels left unassigned after Stage I.

\begin{algorithm}[hptb]
\caption{Two-stage Simultaneous Probability Estimation}
\label{alg:two-stage-mae}
\begin{algorithmic}[1]
    \STATE \textbf{Input:} Purified quantum access $U_P$ to a
    distribution $P$ on $[n]$, and $T\geq\sqrt n$.
    \STATE \textbf{Output:} A vector
    $r=(r_1,\ldots,r_n)$.
    \STATE Set $J\gets\log_2 T^2$, $\tau\gets \frac{1}{T^2}$,
    and $\delta_0\gets 1/(24J)$.
    \STATE \textbf{Stage I: probability grouping.}
    \STATE Initialize $R_0\gets[n]$ and $S_0\gets\varnothing$.
    \FOR{$j=0,\ldots,J-1$}
        \STATE Set $\epsilon_j\gets2^{-j}$.
        \IF{$R_j\neq\varnothing$}
            \STATE Set $M_j\gets\min\{1,2n\epsilon_j\}$.
            \STATE Obtain
            $\tilde{P}^{(j)}
            \gets\textbf{SubsetEst}
            (U_P,R_j,M_j,\epsilon_j/8,\delta_0)$.
            \STATE Set $S_j\gets\{i\in R_j:
            \tilde{p}_i^{(j)}\geq \frac{\epsilon_j}{2}\}$.
        \ENDIF
        \STATE Set $R_{j+1}\gets R_j\setminus S_j$.
    \ENDFOR

    \STATE \textbf{Stage II: higher-precision probability estimation.}
    \STATE Initialize $r_i\gets2\tau$ for every $i\in[n]$.
    \FOR{$j=0,\ldots,J-1$}
        \IF{$S_j\neq\varnothing$}
            \STATE Set $N_j\gets|S_j|$,
            $e_j\gets\frac{\sqrt{\epsilon_j}}{16T}$, and
            $M'_j\gets\min\{1,2N_j\epsilon_j\}$.
            \STATE Obtain $\hat{P}^{(j)} \gets\textbf{SubsetEst}(U_P,S_j,M'_j,e_j,\delta_0)$.
            \FOR{$i\in S_j$}
                \STATE Set $r_i\gets\max\{\widehat p_i^{(j)},2\tau\}$.
            \ENDFOR
        \ENDIF
    \ENDFOR
    \STATE Return $r$.
\end{algorithmic}
\end{algorithm}

\subsection{Proof of Theorem~\ref{the: Two-stage multidimensional amplitude estimation}}

\begin{proof}
Let $T\geq\sqrt n$, $T^2=2^J$, and $\tau=T^{-2}=2^{-J}$. We first bound the probability that all subset-estimation calls
succeed. On this event, we establish an invariant for Stage I
and use it to verify the four stated properties. We then bound
the query cost of each stage.

\paragraph{Success probability.}
Assign failure probability at most $1/(24J)$ to each
subset-estimation call. There are at most $J$ calls in each stage. Applying a union bound gives $\Pr[\text{at least one call fails}]
 \leq 2J\cdot\frac{1}{24J}
 =\frac{1}{12}.$
Thus all calls succeed simultaneously with probability at least
$11/12$. In the correctness analysis, we condition on this event.

\paragraph{Correctness.}
The key invariant is that every label remaining at level $j$
has probability at most twice the current scale. Specifically,
we prove by induction that, for every $i\in R_j$,
\begin{equation}\label{eq:invariant}
 p_i\leq2\epsilon_j.
\end{equation}
The base case $j=0$ follows from $p_i\leq1$ and $\epsilon_0=1$.
Suppose the invariant holds at level $j$. Then the probability
mass of $R_j$ satisfies
\[
 \sum_{i\in R_j}p_i
 \leq\min\{1,2|R_j|\epsilon_j\}
 \leq M_j.
\]
With Lemma~\ref{lem: Multidimensional Amplitude Estimation}, every selected label $i\in S_j$ satisfies
\begin{equation}\label{eq: p_i in S_j}
 \frac{3\epsilon_j}{8}
 \leq\tilde{p}_i^{(j)}-\frac{\epsilon_j}{8}
 \leq p_i\leq 2\epsilon_j.
\end{equation}
Conversely, every unselected label $i''\in R_{j+1}$ satisfies
\begin{align*}
     p_{i''}\leq \tilde{p}_{i''}^{(j)}+\frac{\epsilon_j}{8}
 <\frac{5\epsilon_j}{8}
 \leq2\epsilon_{j+1}.
\end{align*}
This completes the induction. The same argument at the final
level, where $\epsilon_{J-1}=2\tau$, shows that the remaining
labels satisfy $p_i<\frac{5\tau}{4}$ for every $i\in R_J$.

Having localized the probabilities in each selected group,
we turn to the refinement step. For each nonempty $S_j$, let
$N_j=|S_j|$. Equation~\eqref{eq: p_i in S_j} ensures that
$M'_j=\min\{1,2N_j\epsilon_j\}$ is a valid upper bound on its
probability mass. Applying
Lemma~\ref{lem: Multidimensional Amplitude Estimation} with
accuracy $e_j=\frac{\sqrt{\epsilon_j}}{16T}$ yields estimates that
satisfy $|\widehat p_i-p_i|\leq e_j$ for all $i\in S_j$.
Here and below, we suppress the superscript $(j)$ on the
refined estimates. The algorithm then sets
\begin{equation}\label{eq:floor}
 r_i=
 \begin{cases}
 \max\{\widehat p_i,2\tau\},
     &i\in\bigcup_{j=0}^{J-1}S_j,\\
 2\tau,&i\in R_J.
 \end{cases}
\end{equation}
We now verify the four properties in
Theorem~\ref{the: Two-stage multidimensional amplitude estimation}.
\paragraph{Claim 1: lower bounds on the estimates.} $r_i\geq\frac{2}{T^2}$ holds by construction. To prove $p_i\leq2r_i$, first consider $i\in S_j$. We have
\begin{align}
     \frac{e_j}{p_i}\leq\frac{\sqrt{\epsilon_j}}{16T}\frac{8}{3\epsilon_j}=\frac{1}{6T\sqrt{\epsilon_j}}
 \leq\frac{1}{6\sqrt2},
\end{align}
where the first inequality follows from Eq.~\eqref{eq: p_i in S_j}, and the last inequality follows from $\epsilon_j\geq \epsilon_{J-1}=2/T^2$. Consequently, $\hat{p}_i\geq p_i-e_j\geq p_i/2$, and hence
$p_i\leq2r_i$. For $i\in R_J$, this inequality follows from
 $p_i\leq\frac{5\tau}{4}$ and $r_i=2\tau$.

\paragraph{Claim 2: coordinatewise error.} For a label $i\in S_j$ with $\widehat p_i\geq2\tau$, $p_i\geq\frac{15}{8}\tau$. Therefore, the scale bound in
Eq.~\eqref{eq: p_i in S_j} relates the refinement accuracy
to the probability $p_i$ as follows:
\begin{align}
    \abs{p_i-r_i}=\abs{p_i-\hat{p}_i}\leq e_j=\frac{\sqrt{\epsilon_j}}{16T}\leq \frac{\sqrt{\frac{8p_i}{3}}}{16T}\leq \frac{\sqrt{p_i}}{8T}. 
\end{align}
For a label $i\in S_j$ with $\widehat p_i<2\tau$, $p_i\leq\frac{17}{8}\tau$. Thus we have 
\begin{align*}
    |p_i-r_i|=|p_i-2\tau|\leq 2\tau.
\end{align*}
For a remaining label $i\in R_J$, we have
$p_i<\frac{5\tau}{4}$ and $r_i=2\tau$, giving
\begin{align}
    \abs{p_i-r_i}\leq 2\tau=\frac{2}{T^2}.
\end{align}

\paragraph{Claim 3: weighted squared error.} We distinguish two cases according to whether the floor is active. If $i\in S_j$
and $\widehat p_i\geq2\tau$, then
$r_i=\widehat p_i\geq p_i/2$, and hence
\begin{align}
    \frac{(p_i-r_i)^2}{r_i} \leq\frac{2e_j^2}{p_i} \leq\frac{2\epsilon_j}{256T^2}\frac{8}{3\epsilon_j}=\frac{\tau}{48}.
\end{align}
If $i\in S_j$ and $\widehat p_i<2\tau$, then $r_i=2\tau$ and
$p_i\leq2\hat{p}_i<4\tau$. Therefore,
\begin{align}
    \frac{(p_i-r_i)^2}{r_i}=\frac{(p_i-2\tau)^2}{2\tau}\leq2\tau.
\end{align}
For $i\in R_J$, the same bound follows from
$0\leq p_i\leq5\tau/4$ and $r_i=2\tau$. Summing over all labels yields
\begin{align}
    \sum_{i=1}^n\frac{(p_i-r_i)^2}{r_i}\leq2n\tau.
\end{align}

\paragraph{Claim 4: total mass.}
For this calculation, write $e_i=e_j$ when $i\in S_j$ and
$e_i=0$ when $i\in R_J$. The refinement error and the floor
give $r_i\leq p_i+e_i+2\tau$, while
Eq.~\eqref{eq: p_i in S_j} gives
$e_i\leq\sqrt{p_i}/(8T)$. Summing over all labels and applying
the Cauchy--Schwarz inequality together with $T\geq\sqrt n$ yields
\begin{align}
\begin{split}
    \sum_{i=1}^n r_i
 &\leq1+\sum_{i=1}^n e_i+2n\tau\\
 &\leq1+\frac{1}{8T}
          \sum_{i=1}^n\sqrt{p_i}+\frac{2n}{T^2}\\
 &\leq1+\frac{\sqrt n}{8T}
          +\frac{2n}{T^2}
 \leq4.
\end{split}
\end{align}

\paragraph{Query complexity.}
It remains to bound the cost of constructing these estimates.
In Stage I, the mass bound $M_j$ decreases with the scale once
$2n\epsilon_j<1$. Splitting the sum at this threshold and
applying Lemma~\ref{lem: Multidimensional Amplitude Estimation},
we obtain a query count of
\begin{align}
\begin{split}
     &O\rbra{\sum_{j=0}^{J-1}
 \frac{\ln\frac{1}{\epsilon_j\delta_0}\sqrt{\min\{1,2n\epsilon_j\}}}{\epsilon_j}
}\\
&=O\rbra{\sum_{j: \epsilon_j\geq \frac{1}{2n}} \frac{\ln\frac{1}{\epsilon_j\delta_0}}{\epsilon_j}+\sum_{j: \epsilon_j<\frac{1}{2n}}\frac{\ln\frac{1}{\epsilon_j\delta_0}\sqrt{2n}}{\sqrt{\epsilon_j}}}\\
&=O\rbra{n\ln n+\sqrt{n}T\ln T}=\tilde{O}\rbra{\sqrt{n}T}.
\end{split}
\end{align}
For Stage II, the mass bound and the target accuracy both
depend on the group scale. Their ratio yields the following
cost for each nonempty group $S_j$:
\begin{align}
    O\rbra{\frac{\ln\frac{1}{e_j\delta_0}\sqrt{M'_j}}{e_j}}\leq \tilde{O}\rbra{\frac{\sqrt{2N_j\epsilon_j}}{\frac{\sqrt{\epsilon_j}}{16T}}}=\tilde{O}\rbra{T\sqrt{N_j}}.
\end{align}
The sets $S_j$ are disjoint, so $\sum_jN_j\leq n$. Applying the Cauchy--Schwarz inequality across the groups gives
\begin{align}
    \sum_{j=0}^{J-1}\sqrt{N_j}
 \leq\sqrt{J\sum_{j=0}^{J-1}N_j}
 \leq\sqrt{Jn},
\end{align}
and hence Stage II also uses $\tilde{O}(\sqrt n\,T)$ queries.
The notation $\tilde{O}$ absorbs factors depending
logarithmically on $T$ and the inverse per-call failure probability.
Combining the costs of the two stages gives the stated bound
of $\tilde{O}(\sqrt n\,T)$ queries.
\end{proof}

\section{Improved Tsallis entropy estimation for $\frac12<q<1$}\label{sec:correction}

We now apply the two-stage estimation procedure to estimate
the $q$-Tsallis entropy for $1/2<q<1$. Recall that
\[
 F_q(P)=\sum_{i=1}^n p_i^q,
 \qquad
 \mathrm{H}_q^{\mathrm{Tsa}}(P)
 =\frac{F_q(P)-1}{1-q}.
\]
Thus, an additive error of $\eta=(1-q)\varepsilon$ in the
power sum suffices for an additive error of $\varepsilon$
in the entropy.

Our estimator uses the classical vector $r$ from
Theorem~\ref{the: Two-stage multidimensional amplitude estimation} as the expansion point for a first-order
approximation to $F_q(P)$. The resulting expression consists
of a term computable from $r$ and an expectation under $P$,
which we estimate using quantum mean estimation. The weighted
squared-error guarantee for $r$ controls the approximation
remainder, while the bounds $p_i\leq2r_i$ and
$\sum_i r_i\leq4$ control the variance of the random variable used in mean estimation.
We analyze these two contributions separately before combining
them into the complete estimator.

\subsection{A first-order approximation with controlled remainder}

For $0<q<1$, concavity gives the tangent-line bound
\[
 x^q\leq y^q+qy^{q-1}(x-y)
 =(1-q)y^q+qxy^{q-1},
 \qquad x\geq0,\quad y>0.
\]
To apply this approximation, we require a quantitative bound on the remainder that remains valid at $x=0$. The following scalar inequality provides such a bound. It is inspired by the bias-control argument in Lemma~2.6 of \cite{chen2026nearly}. For completeness, we provide a direct proof.
\begin{lemma}[Concave power remainder]\label{lem:tangent}
For $0<q<1$, $x\geq0$, and $y>0$,
\begin{equation}\label{eq:tangent}
 0\leq(1-q)y^q+qxy^{q-1}-x^q
 \leq y^{q-2}(x-y)^2.
\end{equation}
\end{lemma}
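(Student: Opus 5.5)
Since both sides of \eqref{eq:tangent} are homogeneous in $(x,y)$, we normalize. Setting $t=x/y\geq0$ and dividing by $y^q>0$, the claim becomes
\[
0\leq h(t):=(1-q)+qt-t^q\leq (t-1)^2,\qquad t\geq0.
\]
Indeed, $(1-q)y^q+qxy^{q-1}-x^q=y^q h(x/y)$ and $y^{q-2}(x-y)^2=y^q(t-1)^2$. The whole statement therefore reduces to a one-variable inequality on $[0,\infty)$.

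The lower bound is the tangent-line inequality for the concave function $t\mapsto t^q$ at $t=1$. Concretely, $h(1)=0$ and $h'(t)=q(1-t^{q-1})$, which is negative on $(0,1)$ and positive on $(1,\infty)$. Hence $h$ attains its minimum $0$ at $t=1$. The case $t=0$ is covered by continuity, or directly from $h(0)=1-q\geq0$.

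For the upper bound we set $k(t)=(t-1)^2-h(t)$ and show $k\geq0$ on $[0,\infty)$, splitting at $t=1$.

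For $t\geq1$ we use a second-order Taylor argument. We have $h(1)=h'(1)=0$ and $h''(s)=q(1-q)s^{q-2}\leq q(1-q)\leq 1/4$ for $s\geq1$. This gives $h(t)\leq \tfrac18(t-1)^2\leq(t-1)^2$.

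For $0\leq t<1$ the second derivative $h''$ blows up near $0$, so Taylor's theorem is not usable there; this is the main obstacle. We handle it by a direct comparison. Write $u=1-t\in(0,1]$, so we need
\[
(1-q)+q(1-u)-(1-u)^q\leq u^2,\quad\text{that is,}\quad 1-qu-(1-u)^q\leq u^2.
\]
By Bernoulli's inequality, or equivalently concavity, $(1-u)^q\geq 1-u$ for $u\in[0,1]$ and $0<q<1$. Therefore the left side is at most $1-qu-(1-u)=(1-q)u$.

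This bound only gives $(1-q)u$, which is not $\leq u^2$ for small $u$, so it is only useful when $u\geq 1-q$. For small $u$ we need a sharper bound. We use the integral form of the remainder, $h(t)=\int_t^1 (s-t)h''(s)\,ds$, together with $h''(s)=q(1-q)s^{q-2}$. Alternatively, and more simply, we can show $k$ is convex-type, or show directly that $k'(t)\leq0$ on $[0,1]$, since $k(1)=0$. Computing,
\[
k'(t)=2(t-1)-q+qt^{q-1}.
\]
We need $qt^{q-1}-q\leq 2(1-t)$, that is, $q(t^{q-1}-1)\leq 2(1-t)$. This fails as $t\to0$, because the left side tends to $\infty$, so monotonicity is false near $0$.

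We therefore combine the two regimes. On $[0,\tfrac12]$ we have $u\geq\tfrac12$, and the Bernoulli bound gives $h\leq(1-q)u\leq u\leq 2u^2$. To tighten the constant, we use $h(t)\leq 1-q+qt-t^q\leq 1-t^q\leq 1$ together with the observation that $u^2\geq1/4$. This is still not enough, so instead we check $k(0)=1-(1-q)=q\geq0$ and analyze the shape of $k$. Since $k''(t)=2-q(1-q)t^{q-2}$, $k$ is concave on $(0,t_0)$ and convex on $(t_0,1)$, where $t_0=(q(1-q)/2)^{1/(2-q)}$. A function that is concave then convex, with $k(0)\geq0$, $k(1)=0$ and $k'(1)=0$, is nonnegative on $[0,1]$. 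On the convex piece $k$ lies above its tangent at $1$, which is $0$. On the concave piece $k$ lies above the chord between $k(0)\geq0$ and $k(t_0)\geq0$.

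Verifying this concave-then-convex argument carefully is the step we expect to require the most care.
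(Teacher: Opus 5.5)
Your proof is correct, and the concave-then-convex analysis of $k(t)=(t-1)^2-h(t)$ is already complete. The second derivative $k''(t)=2-q(1-q)t^{q-2}$ changes sign exactly once, at $t_0=(q(1-q)/2)^{1/(2-q)}<1$. On $[t_0,\infty)$, convexity together with $k(1)=k'(1)=0$ gives $k\ge 0$, so your separate Taylor step for $t\ge 1$ is actually redundant. On $[0,t_0]$, which is closed by continuity of $t^q$ at $0$, concavity with $k(0)=q\ge 0$ and $k(t_0)\ge 0$ finishes the argument. In a final write-up you should drop the Bernoulli bound, the bound $h\le 1$, and the failed monotonicity attempt.

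The paper takes a different route on $0<z<1$. It keeps the integral remainder $g(z)=q(1-q)\int_z^1(t-z)t^{q-2}\,dt$ and substitutes $t=z+(1-z)u$. Using $t\ge u$ and $q-2<0$, it bounds the integral by $\int_0^1u^{q-1}\,du=1/q$, which yields the explicit bound $g(z)\le(1-q)(1-z)^2$. So your remark that Taylor's theorem is ``not usable'' near $0$ is too strong. Only the crude $\sup h''$ estimate fails; the integral form with this change of variables works fine.

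Your qualitative sign-change argument avoids any integral estimate, and it loses nothing in strength. If you run it with $(1-q)(t-1)^2$ in place of $(t-1)^2$, then $k(0)=0$ and $k''=(1-q)(2-qt^{q-2})$, again with a single sign change below $1$. This recovers the paper's constant $1-q$, now uniformly for all $t\ge 0$. The paper's computation is more quantitative and shows where the constant comes from. For the bias bound \eqref{eq:bias}, only the constant $1$ is needed, so either proof suffices.
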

\begin{proof}
Set $z=x/y$ and
$g(z)=(1-q)+qz-z^q$.
After dividing by $y^q$, it suffices to show that
$0\leq g(z)\leq(z-1)^2$ for $z\geq0$.
Nonnegativity follows from concavity. Since
$g(1)=g'(1)=0$ and
$g''(t)=q(1-q)t^{q-2}$ for $t>0$, Taylor's integral
formula gives, for $z\geq1$,
\[
 g(z)=q(1-q)\int_1^z(z-t)t^{q-2}\,dt
 \leq\tfrac12q(1-q)(z-1)^2.
\]
For $0<z<1$, the same formula gives
\[
 g(z)=q(1-q)\int_z^1(t-z)t^{q-2}\,dt.
\]
With the substitution $t=z+(1-z)u$, we have $t\geq u$.
Because $q-2<0$, it follows that
\begin{align*}
 g(z)
 &=q(1-q)(1-z)^2
   \int_0^1 u\bigl(z+(1-z)u\bigr)^{q-2}\,du\\
 &\leq q(1-q)(1-z)^2\int_0^1u^{q-1}\,du\\
 &=(1-q)(1-z)^2
 \leq(z-1)^2.
\end{align*}
The case $z=0$ follows by continuity. Multiplying by $y^q$
proves the claim.
\end{proof}

Using the estimates $\{r_i\}$ produced by the algorithm in Theorem~\ref{the: Two-stage multidimensional amplitude estimation}, define
\begin{equation}\label{eq:G}
 G(r,P)
 =(1-q)\sum_{i=1}^n r_i^q
 +q\sum_{i=1}^n p_i r_i^{q-1}.
\end{equation}
Lemma~\ref{lem:tangent} and Theorem~\ref{the: Two-stage multidimensional amplitude estimation} imply
\begin{align}
 0\leq G(r,P)-F_q(P)
 &\leq\sum_{i=1}^n r_i^{q-2}(p_i-r_i)^2\notag\\
 &=\sum_{i=1}^n r_i^{q-1}
       \frac{(p_i-r_i)^2}{r_i}\notag\\
 &\leq(2\tau)^{q-1}(2n\tau)
 =2^q n\tau^q
 \leq\frac{2n}{T^{2q}}.
 \label{eq:bias}
\end{align}
Here we used $q-1<0$ to bound
$r_i^{q-1}\leq(2\tau)^{q-1}$.
Thus, the floor on $r_i$ controls the growth of the derivative
near zero, and the weighted squared-error bound controls the
sum of the remainders. In particular, this argument also
covers coordinates for which $p_i=0$.

\subsection{Estimating the correction under the input distribution}

Write
\[
 A(r)=(1-q)\sum_{i=1}^n r_i^q.
\]
This term can be computed classically once $r$ is available.
For the remaining term in Eq.~\eqref{eq:G}, define the variable $W$ as 
\begin{equation}\label{eq:W}
  \Pr[W=r_i^{q-1}]=p_i,\qquad
 \mu=\mathbb E[W]=\sum_{i=1}^n p_i r_i^{q-1}.
\end{equation}
In particular, $G(r,P)=A(r)+q\mu$.

The cost of quantum mean estimation is governed by the
standard deviation of $W$. Using $p_i\leq2r_i$, we obtain
\begin{align}
 \mathbb E[W^2]
 &=\sum_{i=1}^n p_i r_i^{2q-2}\notag\\
 &\leq2\sum_{i=1}^n r_i^{2q-1}\notag\\
 &\leq2n^{2-2q}
       \left(\sum_{i=1}^n r_i\right)^{2q-1}\notag\\
 &\leq2\cdot4^{2q-1}n^{2-2q}.
 \label{eq:moment}
\end{align}
The third line follows from the concavity of $x^{2q-1}$,
since $2q-1\in(0,1)$. Consequently,
\begin{equation}\label{eq:tsallis-sigma}
 \sigma_W\leq \sqrt{\mathbb E[W^2]}
 \leq B_q n^{1-q},
 \qquad B_q=\sqrt{2\cdot4^{2q-1}}.
\end{equation}
This bound is independent of $T$: increasing the precision
of the reference vector reduces the approximation bias
without increasing the correction cost through this variance
bound. The condition $q>1/2$ enters precisely in the
concavity step in Eq.~\eqref{eq:moment}.

Since $r$ is known classically, the value $r_i^{q-1}$ can
be evaluated without further queries to $U_P$. We can
therefore apply Lemma~\ref{lem: mean estimation} to $W$,
using $U_P$ to prepare the distribution and
Eq.~\eqref{eq:tsallis-sigma} to bound its standard deviation.
This yields an estimate $\widehat\mu$ satisfying
\begin{equation}\label{eq:tsallis-mean-error}
 |\widehat\mu-\mu|\leq\frac{\eta}{2q}
\end{equation}
with conditional probability at least $11/12$ for every
fixed successful reference vector $r$. The number of queries is
\begin{equation}\label{eq:tsallis-mean-cost}
 O\!\left(\frac{q B_q n^{1-q}}{\eta}\right)
 =O\!\left(\frac{n^{1-q}}{\eta}\right).
\end{equation}

\subsection{Complete estimator and query complexity}

We now present the complete algorithm and prove its correctness and query complexity.

\begin{algorithm}[htbp]
\caption{Tsallis entropy estimation for $1/2<q<1$}
\label{alg:tsallis-high}
\begin{algorithmic}[1]
\STATE \textbf{Input:} Purified quantum query access $U_P$
to a distribution $P$ on $[n]$, $1/2<q<1$, and
$0<\varepsilon\leq1$.
\STATE Set $\eta\gets(1-q)\varepsilon$ and $T\gets\left(\frac{4n}{\eta}\right)^{1/(2q)}$.
\STATE Construct $r$ using Algorithm~\ref{alg:two-stage-mae} with $T$.
\STATE Compute $A(r)=(1-q)\sum_{i=1}^n r_i^q$ classically.
\STATE Apply Lemma~\ref{lem: mean estimation} to the variable $W$ defined in Eq.~\eqref{eq:W} to estimate $\mu$
to additive error $\eta/(2q)$ with failure probability
at most $1/12$, using the standard-deviation bound
$B_q n^{1-q}$.
Call the result $\widehat\mu$.
\STATE Set $\widehat F\gets A(r)+q\widehat\mu$.
\STATE Return $\widehat H=\frac{\widehat F-1}{1-q}$.
\end{algorithmic}
\end{algorithm}

\begin{theorem}[Tsallis entropy estimation for $1/2<q<1$]
\label{thm:tsallis-high-complete}
Let $P$ be a distribution on $[n]$ with purified quantum query access $U_P$ and $n\geq2$. For any $q\in(1/2,1)$ and $0<\varepsilon\leq1$, there exists a quantum algorithm that estimates
$\mathrm{H}_q^{\mathrm{Tsa}}(P)$ to additive error
$\varepsilon$ with probability at least $5/6$, using
\begin{equation}\label{eq:tsallis-high-complexity}
 \widetilde{O}\left(
\frac{n^{\frac12+\frac{1}{2q}}}
     {\varepsilon^{\frac{1}{2q}}}
+\frac{n^{1-q}}{\varepsilon}
\right)
\end{equation}
queries to $U_P$, $U_P^\dagger$, and their controlled
versions.
\end{theorem}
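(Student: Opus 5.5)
The proof analyzes Algorithm~\ref{alg:tsallis-high}. It combines three earlier results: Theorem~\ref{the: Two-stage multidimensional amplitude estimation}, the bias bound~\eqref{eq:bias}, and the mean-estimation guarantee~\eqref{eq:tsallis-mean-error}. First we check that the choice $T=(4n/\eta)^{1/(2q)}$ is admissible, i.e.\ $T\geq\sqrt n$. Since $q<1$ and $\eta\leq 1$, we have $4n/\eta\geq n\geq 1$, so
\[
T\geq (4n/\eta)^{1/2}\geq\sqrt n .
\]
Strictly, Algorithm~\ref{alg:two-stage-mae} requires $T^2$ to be a power of two. We therefore round $T$ up to the nearest $T'$ with $T'^2\in\{2^k\}$, which changes $T$ by at most a factor $\sqrt2$. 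This rounding only decreases the bias, and it affects the cost by a constant factor.

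For correctness, condition on the event $\mathcal E_1$ that all subset-estimation calls in Algorithm~\ref{alg:two-stage-mae} succeed. This event has probability at least $11/12$, and on it all four properties of Theorem~\ref{the: Two-stage multidimensional amplitude estimation} hold. Then~\eqref{eq:bias} gives
\[
0\leq G(r,P)-F_q(P)\leq \frac{2n}{T^{2q}}\leq \frac{2n\eta}{4n}=\frac{\eta}{2}.
\]
Conditioned on any fixed such $r$, the mean estimation step yields $|\widehat\mu-\mu|\leq\eta/(2q)$ with probability at least $11/12$. Since $\widehat F=A(r)+q\widehat\mu$ and $G(r,P)=A(r)+q\mu$, we get $|\widehat F-G(r,P)|\leq\eta/2$, and hence $|\widehat F-F_q(P)|\leq\eta$. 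Dividing by $1-q$ gives $|\widehat H-\mathrm H_q^{\mathrm{Tsa}}(P)|\leq\varepsilon$. Multiplying the two conditional success probabilities gives at least $(11/12)^2\geq 5/6$; a union bound gives the same.

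Next we justify invoking Lemma~\ref{lem: mean estimation}. The code $O_\Omega$ is $U_P$ itself, with the $G$ register playing the role of the junk state. Because $r$ is a classical vector, $O_X$ is the query-free classical lookup $\ket{i}\ket{0}\mapsto\ket{i}\ket{r_i^{q-1}}$. We take relative precision $\epsilon=\eta/(2qB_qn^{1-q})$, so that $\sigma_W\epsilon\leq\eta/(2q)$ by~\eqref{eq:tsallis-sigma}. If $\epsilon\geq1$, we can instead use any $\epsilon<1$ at constant cost. The lemma then uses $O(n^{1-q}/\eta)$ calls, each involving $O(1)$ applications of $U_P$ or $U_P^\dagger$. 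This matches~\eqref{eq:tsallis-mean-cost}.

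Finally we add the costs. Stage I and Stage II together use
\[
\widetilde O(\sqrt n\,T)=\widetilde O\!\left(\sqrt n\,(n/\eta)^{1/(2q)}\right)=\widetilde O\!\left(\frac{n^{\frac12+\frac1{2q}}}{\varepsilon^{\frac1{2q}}}\right),
\]
where we absorb $(1-q)$ and the factor $4^{1/(2q)}$ into $q$-dependent constants. Mean estimation adds $O(n^{1-q}/\varepsilon)$. The sum is~\eqref{eq:tsallis-high-complexity}.

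The main point needing care is the probability bookkeeping. The standard-deviation bound used to tune mean estimation holds only on $\mathcal E_1$, so the mean-estimation guarantee must be stated conditionally on a fixed successful $r$. Everything else reduces to substituting $T$ into the earlier bounds.
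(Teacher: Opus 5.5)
Your proposal is correct and follows essentially the same route as the paper's proof: the bias bound from the two-stage guarantees with $T=(4n/\eta)^{1/(2q)}$, the mean-estimation guarantee conditional on a fixed successful $r$, a union bound over the two $1/12$ failure events, and adding the $\widetilde O(\sqrt n\,T)$ and $O(n^{1-q}/\eta)$ query costs. Your extra checks, namely that $T\geq\sqrt n$, rounding $T^2$ up to a power of two, and the explicit instantiation of $O_\Omega$ and $O_X$, supply small details that the paper leaves implicit.
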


\begin{proof}
\textbf{Error bound.}
Condition on the success of both Algorithm~\ref{alg:two-stage-mae} and quantum mean
estimation, then
\[
 0\leq G(r,P)-F_q(P)
 \leq\frac{2n}{T^{2q}}\leq\frac{\eta}{2}.
\]
Equation~\eqref{eq:tsallis-mean-error} bounds the
contribution of mean estimation by
$q|\widehat\mu-\mu|\leq\eta/2$.
Combining the two error contributions gives
\begin{align*}
 |\widehat F-F_q(P)|\leq q|\widehat\mu-\mu|
       +|G(r,P)-F_q(P)|\leq\frac{\eta}{2}+\frac{\eta}{2}
 =\eta.
\end{align*}
Dividing by $1-q$ gives an entropy estimate with error at most $\varepsilon$.

\textbf{Success probability.}
The reference-vector construction fails with probability
at most $1/12$. For every successful output $r$, the
conditional failure probability of mean estimation is
at most $1/12$. Therefore,
\[
 \Pr[\text{either required event fails}]
 \leq\frac{1}{12}+\frac{1}{12}=\frac{1}{6}.
\]
This argument does not require independence between the
two stages.

\textbf{Query complexity.}
The reference-vector construction uses
$\widetilde O(\sqrt n\,T)$ queries. Hence, the query cost of constructing the reference vector is
\[
 \widetilde{O}\left(
\frac{n^{\frac12+\frac{1}{2q}}}
     {\eta^{\frac{1}{2q}}}
\right).
\]
By Eq.~\eqref{eq:tsallis-mean-cost}, estimating the
correction requires $O(\frac{n^{1-q}}{\eta})$ additional queries.
The vector $r$ is constructed once and reused throughout
mean estimation,
so these two costs add. Substituting
$\eta=(1-q)\varepsilon$ yields
\[
 \widetilde O\!\left(
 \frac{n^{\frac12+\frac{1}{2q}}}{((1-q)\varepsilon)^{\frac{1}{2q}}}
 +\frac{n^{1-q}}{(1-q)\varepsilon}
 \right),
\]
which gives Eq.~\eqref{eq:tsallis-high-complexity} for
fixed $q\in(1/2,1)$.
\end{proof}

\section{Discussion}

We have presented two approaches to estimating the
$q$-Tsallis entropy for $0<q<1$. For $0<q\leq1/2$, uniform
initialization within the multilevel framework of
\cite{chen2026quantum} yields the query complexity
$\widetilde{O}\left(
\left(\frac{n}{\varepsilon}\right)^{\frac{1}{2q}}
\right)$, which is optimal up
to logarithmic factors. For $1/2<q<1$, the two-stage simultaneous probability estimation procedure provides probability estimates whose
accuracy is adapted to the unknown magnitudes, and combining these
estimates with quantum mean estimation yields the complexity
$\widetilde{O}\left(
\frac{n^{\frac12+\frac{1}{2q}}}
     {\varepsilon^{\frac{1}{2q}}}
+\frac{n^{1-q}}{\varepsilon}
\right)$,
which is smaller than the bound of \cite{chen2026quantum} in the
high-precision regime $\varepsilon= O\rbra{n^{1-2q}}$.

Several questions remain open. For $1/2<q<1$, closing the $\sqrt n$
gap between our first term and the lower bound would require a
stronger reusable estimation primitive. It would also be of interest
to determine whether the two-stage simultaneous probability estimation procedure can be applied to other problems, such as
R\'enyi entropy estimation and $\ell_1$-distance estimation.

\section*{Declaration of AI Use}

The authors acknowledge the use of OpenAI's language models in
developing Theorem~\ref{the: Two-stage multidimensional amplitude estimation}, as well as in editing and improving the presentation
of the manuscript. All AI-assisted contributions were reviewed and
verified by the authors, who take full responsibility for the
correctness and integrity of the results presented in this paper.

\bibliographystyle{alpha}
\bibliography{refs}

\end{document}